\DeclareMathOperator{\E}{E}
\DeclareMathOperator{\I}{I}
\renewcommand{\P}{\mathrm{P}}
\newcommand{\Q}{\mathrm{Q}}
\newcommand{\bnu}{\bar\nu}
\newcommand{\EE}{\mathcal{E}}
\newcommand{\D}{\mathcal{D}}
\newcommand{\e}{\mathcal{E}}
\newcommand{\B}{\mathcal{B}}
\newcommand{\F}{\mathcal{F}}
\newcommand{\PP}{\mathcal{P}}
\newcommand{\FF}{\mathbb{F}}
\newcommand{\R}{\mathbb{R}}
\newcommand{\RMpp}{\mathbb{R}^M_{++}}
\newcommand{\RNP}{{\mathbb{R}^N_+}}
\newcommand{\s}{{(s)}}
\newcommand{\bl}{\boldsymbol{l}}
\newcommand{\bL}{\boldsymbol{L}}
\newcommand{\cadlag}{{c\`adl\`ag}}
\renewcommand{\hat}{\widehat}
\renewcommand{\tilde}{\widetilde}
\renewcommand{\epsilon}{\varepsilon}
\newcommand{\cint}{\boldsymbol{\cdot}}
\newcommand{\cd}{\boldsymbol{\cdot}}
\newcommand{\dd}[2]{\frac{d{#1}}{d{#2}}}
\newtheorem{theorem}{Theorem}
\newtheorem{lemma}{Lemma}
\newtheorem{proposition}{Proposition}
\theoremstyle{definition}
\newtheorem{definition}{Definition}
\newtheorem*{remark*}{Remark}
\title{A continuous-time asset market game with short-lived assets}
\author{Mikhail Zhitlukhin\thanks{Steklov Mathematical Institute of the
Russian Academy of Sciences. 8 Gubkina St., Moscow, Russia. Email:
mikhailzh@mi-ras.ru. The research was supported by the Russian Science
Foundation, project no. 18-71-10097.}}
\date{30 August 2020}
\begin{document}
\maketitle

\begin{abstract}
We consider a continuous-time game-theoretic model of an investment market
with short-lived assets and endogenous asset prices. The first goal of the
paper is to formulate a stochastic equation which determines wealth
processes of investors and to provide conditions for the existence of its
solution. The second goal is to show that there exists a strategy such that
the logarithm of the relative wealth of an investor who uses it is a
submartingale regardless of the strategies of the other investors, and the
relative wealth of any other essentially different strategy vanishes
asymptotically. This strategy can be considered as an optimal growth
portfolio in the model.

\medskip \textit{Keywords:} asset market game, relative growth optimal
strategy, martingale convergence, evolutionary finance.

\medskip
\noindent
\emph{MSC 2010:} 91A25, 91B55. \emph{JEL Classification:} C73, G11.
\end{abstract}

\section{Introduction}
This paper proposes a dynamic game-theoretic model of an investment market
-- an \emph{asset market game} -- and study strategies that allow an
investor to achieve faster growth of wealth compared to rival market
participants. The model provides an outlook on growth optimal portfolios
different from the well-known theory in a single-investor setting, which
originated with \citet{Kelly56} and \citet{Breiman61} (see also
\cite{KaratzasKardaras07,PlatenHeath06,AlgoetCover88} for a modern
exposition of the subject). Our results belong to the strand of research on
\emph{evolutionary finance} -- the field which studies financial markets
from a point of view of evolutionary dynamics and investigates properties of
investment strategies like survival, extinction, dominance, and how they
affect the structure of a market. Reviews of recent results in this
direction can be found in, e.g., \cite{Holtfort19,EvstigneevHens+16}. While
the majority of models in evolutionary finance are discrete-time, the
novelty and one of the goals of this paper consists in developing a
continuous-time model.

The model considered here describes a market consisting of several assets
and investors. The assets yield random payoffs which are divided between the
investors proportionally to the number of shares of each asset held by an
investor. One feature of the model, which makes it different from the
classical optimal growth theory, is that the asset prices are determined
endogenously by a short-run equilibrium of supply and demand and depend on
the investors' strategies. As a result, the investor's wealth depends not
only on their own strategies and realized assets' payoffs but also on
strategies of the other investors in the market.

One of the main results of the paper is a proof of the existence of a
strategy such that the logarithm of the relative wealth of an investor who
uses it is a submartingale, regardless of the strategies used by the other
investors (by the relative wealth we mean the share of wealth of one
investor in the total wealth of the market). In particular, we do not assume
that investors are necessarily rational in the sense that their actions can
be described as solutions of some well-posed optimization problems, and they
need not be aware of strategies used by their rivals. Remarkably, the
optimal strategy that we find needs only to know the current total market
wealth and the probability distribution of future payoffs, but does not
require the knowledge of the other investors' strategies or their individual
wealth. Such a strategy can be attractive for possible applications, since
quantitative information about individual market agents is always scarce.
Besides this submartingale property, the strategy has other good
characteristics (similar to those of growth optimal strategies in
single-investor models), which will be also investigated in the paper.

The importance of these results consists in more than just the fact of
existence of a ``good'' strategy -- they also allow to describe the
asymptotic structure of a market, i.e.\ the asymptotic distribution of
wealth, the asset prices, and the representative strategy of all investors.
We prove that if at least one investor uses the optimal strategy, then the
relative wealth of the other investors, who use essentially different
strategies, will vanish asymptotically and those investors will have a
vanishing impact on the market. Consequently, the market becomes determined
by investors who use the optimal strategy or strategies which are
asymptotically close to it. Results of this type are well-known in the
literature, beginning with the seminal paper of \citet{BlumeEasley92}. It is
worth noting that many of them do not employ standard game-theoretic
concepts of a solution of a game, e.g.\ the Nash equilibrium, but rather
seek for ``unbeatable'' or ``winning'' strategies (though sometimes one can
show that an optimal strategy provides a Nash equilibrium for a particular
payoff function). In some cases, this fact can be regarded as an advantage
as it allows to model arbitrary behavior of market agents and does not rely
on unobservable characteristics like individual utilities or beliefs.

In this paper we deal with a simplified market model and consider a market
with only \emph{short-lived} assets. Such assets can be viewed as short-term
investment projects rather than, e.g., common stock -- they are traded at
time $t$, yield payoffs at the ``next infinitesimal'' moment of time, and
then get replaced with new assets. Short-lived assets have no liquidation
value, so investors can get profit (or loss) only from receiving asset
payoffs and paying for buying new assets. Despite being a simplification of
real stock markets, models with short-lived assets have been widely studied
in the literature because they are more amenable to mathematical analysis
and ideas developed for them may be transferred to advanced models (see a
discussion in \cite{EvstigneevHens+16}).

This paper is tightly connected with paper \cite{DrokinZhitlukhin20}, which
considers the same model in discrete time. Regarding mathematical methods,
both of the papers are based on the approach proposed
by~\citet{AmirEvstigneev+13}, which directly shows (in discrete time) that
the logarithm of wealth of an investor who uses the optimal strategy is a
submartingale regardless of the other investors' strategies (see also the
paper of~\citet{AmirEvstigneev+11} where similar but technically more
involved ideas were used for a model with long-lived assets). Then, using
martingale convergence theorems, we can obtain results about the asymptotic
structure of a market. This martingale approach is more general compared to
methods used in earlier works, which were based on assumptions that payoff
sequences and/or strategies are stationary (as in, e.g.,
\cite{HensSchenkHoppe05,EvstigneevHens+06}). An essential difference of our
model and the model of \cite{AmirEvstigneev+13} (in addition to that we
consider a continuous-time model) is that Amir et al. assume that market
agents spend their whole wealth for purchasing assets in each time period,
so the total market wealth is always equal the most recent total payoff of
the assets. On the other hand, our model includes a risk-free asset (cash or
a bank account with zero interest rate) that can be used by investors to
store capital. This leads to more complicated wealth dynamics, but is
necessary for consideration of a continuous-time model, where asset payoffs
can be infinitesimal but yielded in a continuous way. Moreover, adding the
possibility to store capital in cash opens interesting questions about the
asymptotics of the total market wealth, which do not arise in models where
the whole wealth is spend for purchasing assets with random payoffs. For
example, as was observed in \cite{DrokinZhitlukhin20}, greater uncertainty
in asset payoffs may result in faster growth of investors' wealth -- a fact
which at first may seem counter-intuitive. In the present paper, we consider
similar questions for the continuous-time model.

In evolutionary finance, there are few models with continuous time. One can
mention the papers
of~\citet{PalczewskiSchenkHoppe10-JME,PalczewskiSchenkHoppe10-JEDC}, in
which a continuous-time model with long-lived assets is constructed. The
paper~\cite{PalczewskiSchenkHoppe10-JEDC} proves that the model can be
obtained as a limit of discrete-time models, and
\cite{PalczewskiSchenkHoppe10-JME} investigates questions of survival of
investments strategies in it. However, their results are obtained only for
time-independent strategies and under the assumption that cumulative
dividend processes are pathwise absolutely continuous. In the present paper,
we allow strategies to be time-dependent and asset payoffs to be represented
by arbitrary processes. A~continuous-time model with short-lived assets was
also constructed in \cite{Zhitlukhin18}. An essential limitation of that
paper consists in the assumption that all investors spend the same
proportion of their wealth for purchasing assets which is specified
exogenously. This makes the mathematical analysis of the model considerably
simpler compared to the present paper, both in showing that the wealth
process is well-defined, and in construction of the optimal strategy.

The paper is organized as follows. In Section~\ref{sec-discrete-model}, we
briefly describe a discrete-time model, which helps to explain the main
ideas of the paper. The general model is formulated in
Section~\ref{sec-model}. In Section~\ref{sec-optimal-strategy}, we define
the notion of optimality of a strategy and construct a candidate optimal
strategy. In Section~\ref{sec-main}, we formulate the main results, which
state that this strategy is indeed optimal, and investigate some of its
properties. Section~\ref{sec-proofs} contains the proofs of the results. In
the \hyperref[appendix]{appendix}, we formulate and prove several auxiliary
facts about the Lebesgue decomposition and Lebesgue derivatives of
non-decreasing random functions which are used in the paper.

\section{Preliminary consideration: a discrete-time model}
\label{sec-discrete-model}
In this section, we describe the main ideas of the paper using a simple
model with discrete time which avoids technical details of continuous time.
Based on the discrete-time model, in Section~\ref{sec-model} a general
continuous-time model will be formulated. The model presented here is a
slightly simplified version of the model from \cite{DrokinZhitlukhin20}.

Let $(\Omega,\F,\P)$ be a probability space with a filtration
$\FF=(\F_t)_{t=0}^\infty$. The model includes $M\ge 2$ investors and $N\ge
1$ assets which yield non-negative random payoffs at moments of time
$t=1,2,\ldots$ The assets live for one period: they are purchased by the
investors at time $t$, yield payoffs at $t+1$, and then the cycle repeats.
The asset prices are determined endogenously by a short-run equilibrium of
supply and demand. The supply of each asset is normalized to 1, and the
demand depends on actions of the investors. The payoffs are specified in an
exogenous way, i.e.\ do not depend on the investor's actions. Each investor
receives a part of a payoff yielded by an asset which is proportional to the
owned share of this asset.

The asset payoffs are specified by random sequences $A_t^n\ge 0$ adapted to
the filtration. The wealth of investor $m$ is described by an adapted random
sequence $Y_t^m\ge 0$. The initial wealth $Y_0^m$ of each investor is
non-random and strictly positive. The wealth $Y_t^m$ at subsequent moments
of time $t\ge 1$ is determined by the investors' strategies and the asset
payoffs.

A strategy of investor $m$ is a plan according to which this investor
allocates the available budget $Y_t^m$ towards a purchase of assets. Such an
allocation is specified by a sequence of vectors
$l_t^m=(l_t^{m,1},\ldots,l_t^{m,N})$, where $l_t^{m,n}$ is a budget
allocated towards a purchase of asset $n$ at time $t-1$. At each moment of
time, the vectors $l_t^m$ are selected by the investors simultaneously and
independently, so the model represents a simultaneous-move $N$-person
dynamic game, and $l_t^m$ represent the investors' actions. These actions
may depend on a random outcome $\omega$ and current and past wealth of the
investors, so we define a strategy $\bl^m$ of investor $m$ as a sequence of
$\F_{t-1}\otimes\B(\R^{tM}_+)$-measurable functions
\[
\bl_t^{m}(\omega, y_0,\ldots,y_{t-1})\colon \Omega\times \R^{tM}_+\to
\R_+^N,\qquad t=1,2,\ldots.
\]
(We will use boldface letters to distinguish between strategies and their
realizations, see below.) The arguments $y_s=(y_s^1,\ldots,y_s^M)\in
\R^M_+$, $s\le {t-1}$, correspond to the wealth of the investors at the past
moments of time. It is assumed that short sales are not allowed, so
$l_t^{m,n}\ge 0$, and it is not possible to borrow money, so $\sum_n
l_t^{m,n} \le y_{t-1}^m$. The amount of wealth $y_{t-1}^m - \sum_n
l_t^{m,n}$ is held in cash and carried forward to the next time period.

After selection of investment budgets $l_t^m$ by the investors, the
equilibrium asset prices $p_{t-1}^n$ are determined from the market clearing
condition that the aggregate demand for each asset is equal to the aggregate
supply, which is assumed to be 1. At time $t-1$, investor $m$ can buy
$x_t^{m,n} = l_t^{m,n} / p_{t-1}^n$ units of asset $n$, so its price at time
$t-1$ should be equal to the total amount of capital invested in this asset,
$p_{t-1}^n = \sum_m l_{t}^{m,n}$. If $\sum_m l_t^{m,n}=0$, i.e. no one buys
asset $n$, we put $p_{t-1}^n=0$ and $x_t^{m,n}=0$ for all $m$.

Thus, investor $m$'s portfolio between moments of time $t-1$ and $t$
consists of $x_t^{m,n}$ units of asset $n$ and $c_t^{m} := y_{t-1}^m-\sum_n
l_t^{m,n}$ units of cash. At a moment of time $t$, the total payoff received
by this investor from the assets in the portfolio is equal to $\sum_n
x_t^{m,n} A_t^n$. In our model, the assets have no liquidation value, so the
budgets used at time $t-1$ for buying assets are not returned to the
investors. Consequently, investor $m$'s wealth is described by the adapted
sequence $Y_t^m$ which is defined by the recursive relation
\begin{equation}
Y_{t}^m(\omega) = Y_{t-1}^m(\omega)-\sum_{n=1}^N l_{t}^{m,n}(\omega) + \sum_{n=1}^N
\frac{l_t^{m,n}(\omega)}{\sum_k
l_{t}^{k,n}(\omega)} A_t^n(\omega),\qquad
t\ge 1,
\label{capital-equation-discrete}
\end{equation}
where $l_t^{m,n}(\omega) = \bl_t^{m,n}(\omega, Y_0,Y_1(\omega),\ldots,
Y_{t-1}(\omega))$ are the realizations of the investors' strategies, with
$0/0=0$ in the right-hand side of \eqref{capital-equation-discrete}.

Note that the investors' actions precede the asset prices, so they first
``announce'' the budgets they plan to allocate for buying the assets, and
then the prices are adjusted to clear the market. This modeling approach is
analogous to market games of Shapley--Shubik type. Its justification and
details can be found in, e.g., \citet{ShapleyShubik77}. Also, one can see
that the asset prices do not enter equation
\eqref{capital-equation-discrete}. They are needed to provide a financial
interpretation of the equation, but we will not work with them directly.

We will be mainly interested in relative wealth of investors. For investor
$m$, we define the relative wealth as the adapted sequence
\[
r_t^m = \frac{Y_t^m}{\sum_k Y_t^k}.
\]
Our goal will be to identify a strategy such that the relative wealth of an
investor who uses it grows in the following sense: for any strategies of the
other investors and any initial wealth, the sequence $\ln r_t^m$ is a
submartingale (as a consequence, $r_t^m$ will be a submartingale as well).
Such a strategy will exhibit several asymptotic optimality properties, which
we will consider in Sections \ref{sec-optimal-strategy} and \ref{sec-main}.

\section{The general model}
\label{sec-model}
In order to formulate a continuous-time counterpart of
equation~\eqref{capital-equation-discrete}, observe that it can be written
in the following form:
\begin{equation}
\Delta Y_t^m(\omega) =  -\sum_{n=1}^N \Delta L_t^{m,n}(\omega) +
\sum_{n=1}^N \frac{\Delta L_t^{m,n}(\omega)}
{\sum_k \Delta L_t^{k,n}(\omega)} \Delta X_t^n(\omega),
\label{capital-equation-rewritten}
\end{equation}
where 
\[
L_t^{m,n}(\omega) = \sum_{s=1}^t l_s^{m,n}(\omega), \qquad X_t^n(\omega) =
\sum_{s=1}^t A_s^n(\omega)
\]
are, respectively, the process of the cumulative wealth invested by investor
$m$ in asset~$n$ and the cumulative payoff process of asset $n$. The symbol
$\Delta$ denotes a one-step increment, e.g. $\Delta Y_t^m =
Y^m_t-Y^m_{t-1}$.

The form of equation \eqref{capital-equation-rewritten} suggests that an
analogous model with continuous time can be obtained by considering
continuous-time processes $X_t,Y_t,L_t$ and replacing one-step
increments with infinitesimal increments, e.g.\ $\Delta X_t$ with $d X_t$.
Our next goal will be to define such a model properly. The model
we are about to formulate includes the above discrete-time model as a
particular case, but we do not investigate convergence of the discrete-time
model to the general model.

\medskip\noindent
\textbf{Notation.} We will work on a filtered probability
space $(\Omega,\F,\FF,\P)$ with a continuous-time filtration
$\FF=(\F_t)_{t\in\R_+}$ satisfying the usual assumptions. By $\PP$ we will
denote the predictable $\sigma$-algebra on $\Omega\times\R_+$.

As usual, equalities and inequalities for random variables are assumed to
hold with probability one. For random processes, an equality $X=Y$ is
understood to hold up to $\P$-indistinguishability, i.e.\ $\P(\exists\, t :
X_t\neq Y_t) = 0$; in the same way we treat inequalities. Pathwise
properties (continuity, monotonicity, etc.) are assumed to hold for all
$\omega$.

For vectors $x,y\in \R^N$, by $xy=\sum_n x^ny^n$ we denote the scalar
product, by $|x|=\sum_n |x^n|$ the $l_1$-norm of a vector, and by $\|x\| =
\sqrt{xx}$ the $l_2$-norm. For a scalar function $f\colon \R\to\R$ and a
vector $x$ the notation $f(x)$ means the application of the function to each
coordinate of the vector, $f(x) = (f(x^1),\ldots, f(x^N))$. If $x\in
\R^{MN}$, we denote by $x^m$ the vector $(x^{m,1}, \ldots,x^{m,N})\in\R^N$
and by $x^{\cd,n}$ the vector $(x^{1,n},\ldots, x^{M,n})\in \R^M$. The
maximum of two numbers $a,b$ is written as $a\vee b$, and the minimum as
$a\wedge b$.

The notation $\xi\cint G_t$ is used for the integral of a process $\xi$ with
respect to a process $G$. In what follows, all the integrators are
non-decreasing \cadlag\ processes, so the integrals are understood in the
pathwise Lebesgue-Stieltjes sense ($f\cint G_t(\omega) = \int_0^t
f_s(\omega) d G_s(\omega)$). If $f, G$ are vector-valued, then $f\cint G_t =
\sum_n f^n \cint G^n_t$.

\subsection{Payoff processes and investment strategies}
There are $N\ge 1$ assets yielding random payoffs which are distributed
between $M\ge 2$ investors. The cumulative payoffs are represented by
exogenous adapted non-decreasing \cadlag\ processes $X_t$ with values in
$\RNP$. Without loss of generality $X_0=0$.

A strategy of investor $m$ is identified with a function $\bL$ which
represents the cumulative wealth invested in each asset and assuming values
in $\RNP$. In order to specify how a strategy may depend on the past history
of the market, let $(D,\D,(\D_t)_{t\ge0})$ denote the filtered measurable
space consisting of the space $D$ of non-negative \cadlag\ functions
$y\colon\R_+\to \R^M_+$, the filtration $\D_t = \sigma(d_u,u\le t)$, where
$d_u$ is the mapping $d_u(y) = y_u$ for $y\in D$, and $\D = \bigvee_{t\ge0}
\D_t$. Elements $y$ of the space $D$ represent possible paths of the wealth
processes of the investors (which are yet to be defined) on the whole time
axis $\R_+$. The wealth of each investor cannot become negative (this
assumption will be imposed on a solution of the wealth equation in the next
section), hence $y$ assume values in $\RNP$.

Let  $(E, \EE, (\EE_t)_{t\ge0})$ be the filtered measurable space with
\[
E = \Omega\times D, \qquad \EE_t = \F_t\otimes \D_t, \qquad \EE =
\bigvee_{t\ge0} \EE_t.
\]
Let $\PP^E$ denote the predictable $\sigma$-algebra on $E\times \R_+$, i.e.
$\PP^E$ is generated by all measurable functions $\xi(\omega,y,t)\colon
E\times \R_+ \to \R$ which are left-continuous in $t$ for any fixed
$(\omega,y)$ and $\EE_t$-measurable for any fixed $t$. In what follows,
functions $\xi(\omega,y,t)$ will be often written as $\xi_t(\omega,y)$, or
$\xi_t(y)$ when omitting $\omega$ does not lead to confusion.

\begin{definition}
A strategy of an investor is a $\PP^E$-measurable function $\bL_t(\omega,y)$
with values in $\R_+^N$ and $\bL_0(\omega,y)=0$, which is non-decreasing and \cadlag\ in~$t$.
\end{definition}

The following lemma will be used further in the construction of the model.

\begin{lemma}
\label{lem-composition}
Let $\bL_t(y)$ be a $\PP^E$-measurable function, and $Y$ an adapted
\cadlag\ process with values in $\R_+^M$. Then the process $L_t(\omega) =
\bL_t(\omega,Y(\omega))$ is predictable ($\PP$-measurable).
\end{lemma}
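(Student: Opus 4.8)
The plan is to realize $L$ as a composition of measurable maps. Define $\Phi\colon\Omega\times\R_+\to E\times\R_+$ by $\Phi(\omega,t) = (\omega,\,Y(\omega),\,t)$, where $Y(\omega)$ denotes the whole path $t\mapsto Y_t(\omega)$; this is well-defined because $Y$ is \cadlag\ with values in $\R_+^M$, so each path $Y(\omega)$ belongs to $D$. Then $L_t(\omega) = \bL_t(\omega,Y(\omega)) = \bL(\Phi(\omega,t))$, so, since $\bL$ is $\PP^E$-measurable by hypothesis and composition preserves measurability, it is enough to prove that $\Phi$ is measurable as a map from $(\Omega\times\R_+,\PP)$ to $(E\times\R_+,\PP^E)$.

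To establish this, I would use that $\PP^E$ is generated by the predictable rectangles $\{0\}\times B$ with $B\in\EE_0$ and $(s,u]\times B$ with $0\le s<u$ and $B\in\EE_s$ (their indicators are left-continuous, adapted processes, which generate $\PP^E$). Taking preimages,
\[
\Phi^{-1}\bigl((s,u]\times B\bigr) = \{\omega : (\omega,Y(\omega))\in B\}\times(s,u],\qquad
\Phi^{-1}\bigl(\{0\}\times B\bigr) = \{\omega : (\omega,Y(\omega))\in B\}\times\{0\},
\]
and since $C\times(s,u]\in\PP$ whenever $C\in\F_s$ and $C\times\{0\}\in\PP$ whenever $C\in\F_0$, the whole matter reduces to showing that the map $\psi_s\colon\omega\mapsto(\omega,Y(\omega))$ is $\F_s/\EE_s$-measurable for every $s\ge0$.

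Since $\EE_s=\F_s\otimes\D_s$, the map $\psi_s$ is measurable once it is measurable in each component: the $\Omega$-component is the identity, hence $\F_s/\F_s$-measurable, while the $D$-component $\omega\mapsto Y(\omega)$ is $\F_s/\D_s$-measurable because $\D_s=\sigma(d_v : v\le s)$ is generated by the evaluation maps $d_v$, and $d_v\circ Y = Y_v$ is $\F_v$-measurable by adaptedness of $Y$, hence $\F_s$-measurable, for each $v\le s$. Combining these observations shows that $\Phi$, and therefore $L=\bL\circ\Phi$, is $\PP$-measurable.

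I do not expect a genuine obstacle here. The only substantive point is the level-by-level bookkeeping that matches the time-$s$ $\sigma$-algebra $\EE_s$ on the target space with $\F_s$ on the source space, and this is exactly where adaptedness of $Y$ enters; the remaining ingredients are the routine facts that composition preserves measurability and that the predictable $\sigma$-algebra is generated by predictable rectangles. A slightly longer alternative would be a functional monotone class argument ranging over the $\PP^E$-measurable functions $\bL$, but the composition argument is the cleanest.
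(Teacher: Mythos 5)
Your proof is correct, and it rests on the same two ingredients as the paper's argument — a generating class of predictable rectangles for $\PP^E$ and the adaptedness of $Y$ — but you package them differently. You prove once and for all that the substitution map $\Phi(\omega,t)=(\omega,Y(\omega),t)$ is measurable from $(\Omega\times\R_+,\PP)$ to $(E\times\R_+,\PP^E)$, so that $L=\bL\circ\Phi$ is predictable for every $\PP^E$-measurable $\bL$ simultaneously; the paper instead approximates $\bL$ by simple functions, reduces to indicators of rectangles $C\times[s,\infty)$ with $C\in\EE_{s-}$, and then further reduces $C$ to finite-dimensional cylinder sets $A\times\{y : y_{s_i}\in B_i\}$. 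Your route replaces that cylinder-set reduction by the observation that $\psi_s\colon\omega\mapsto(\omega,Y(\omega))$ is $\F_s/\EE_s$-measurable because $\EE_s=\F_s\otimes\D_s$ and $\D_s$ is generated by the evaluations $d_v$, $v\le s$ — the same fact in a cleaner, coordinate-free form, and it buys you a slightly more reusable statement (measurability of the substitution map itself). The only point you assert rather than prove is that the rectangles $\{0\}\times B$, $B\in\EE_0$, and $(s,u]\times B$, $B\in\EE_s$, actually generate all of $\PP^E$ (your parenthetical shows only that they are contained in it); this is the standard approximation of left-continuous adapted processes by simple predictable ones and is on a par with the paper's own appeal to Liptser--Shiryaev for its generating class, so I do not regard it as a gap.
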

\begin{proof}
The $\sigma$-algebra $\PP^E$ is generated by sets
$C\times[s,\infty)$, where $s\ge 0$ and $C \in \EE_{s-}$ (as usual,
$\EE_{s-}=\bigvee_{u<t} \EE_u$ and $\EE_{0-} = \EE_0$), see
\cite[\S\,1.2]{LiptserShiryaev89en}.
Hence, approximating $\bL_t(y)$ by simple $\PP^E$-measurable functions, it
is enough to prove the lemma for functions 
\begin{equation}
\bL_t(\omega,y) = \I((\omega,y)\in C)\I(t\ge s), \qquad C\in \EE_{s-},
\; s\ge 0.\label{lem-composition-1}
\end{equation}
Using that $\D_s$ is generated by sets
\[
\{y\in D : y_{s_i} \in B_i,\; i=1\ldots,n\},
\]
where $s_1 < \ldots < s_n\le s$ and $B_i\in \B(\R_+^M)$, one can see that in
\eqref{lem-composition-1} it is enough to consider only sets $C$ of the
form
\[
C = A\times \{y\in D : y_{s_i} \in B_i,\; i=1\ldots,n\},\quad A \in
\F_{s_n},\;  s_1<\ldots< s_n<s.
\]
For such sets, $\I((\omega,Y(\omega)) \in C) = \I(\omega\in A)
\I(Y_{s_i}(\omega) \in B_i,\; i\le n)$ is $\F_{s_n}$-measurable, and, hence,
$\F_{s-}$-measurable. Therefore, $L_t(\omega) = \I((\omega,Y(\omega))\in C)
\I(t\ge s)$ is $\PP$-measurable.
\end{proof}

\subsection{The wealth equation}
The wealth of the investors will be described by a \cadlag\ adapted process $Y$
with values in $\R_+^M$. In this section we state the equation which
defines~$Y$. We will always assume that the initial wealth $Y_0^m$ of each
investor is non-random and strictly positive. The set of vectors $y\in
\R^M_+$ with all strictly positive coordinates will be denoted by $\RMpp$.

Let $X^c$ denote the continuous part of the payoff process $X$, i.e.\ the
non-decreasing process with values in $\R_+^N$ defined as
\[
X_t^c = X_t - \sum_{s\le t} \Delta X_t,
\]
where $\Delta X_s = X_s - X_{s-}$ and for $s=0$ we put $\Delta X_0 = 0$.
Denote by $\mu$ the measure of jumps of $X$ and by $\nu$ its compensator.
Define the predictable scalar process $G$ (the so-called operational time
process) as
\[
G_t = |X_t^c| + (|x|\wedge1)*\nu_t,
\]
where the star denotes integration with respect to $\nu$, i.e.\  for a
measurable function $f(\omega,t,x)$ on $\Omega\times\R_+\times\R^N_+$
\[
f*\nu_t(\omega) = \int_0^t f(\omega,s,x) \nu(\omega,ds,dx).
\]
Note that $X^c$ is not the continuous martingale part of $X$, as is usually
denoted in stochastic calculus. Actually, all the martingales in our paper
will have zero continuous part, so the notation $X^c$ should not lead to
confusion.

Let $H$ be an arbitrary scalar predictable \cadlag\ non-decreasing process
such that $G\ll H$ (i.e. for a.a.\ $\omega$ the measure on $\R_+$ generated
by the function $G_t(\omega)$ is absolutely continuous with respect to the
measure generated by $H_t(\omega)$).

\begin{definition}
\label{def-feasible-profile}
We call a strategy profile $(\bL^1,\ldots,\bL^M)$ and a vector of initial
wealth $y_0\in \RMpp$ \emph{feasible} if there exists a unique (up to
$\P$-indistinguishability) non-negative \cadlag\ adapted process $Y$, called
the \emph{wealth process}, which assumes values in $\R_+^M$ and satisfies
the following conditions:

\begin{enumerate}[leftmargin=*,label=\arabic*),itemsep=0mm,topsep=0mm] 
\item $Y$ solves the \emph{wealth equation}
\begin{equation}
d Y_t^m = - d |L^m_t| + \sum_{n=1}^N \frac{l_t^{m,n}}{|l_t^{\cd,n}|}
d X_t^n,\qquad Y_0^m=y_0^m\label{capital-eq-diff} \\
\end{equation}
for $m=1,\ldots,M$, where $L_t^m(\omega) = \bL_t^m(\omega, Y(\omega))$, and
$l$ is any $\P\otimes H$-version of the $\R_+^{MN}$-valued process of
predictable Lebesgue derivatives (see the \hyperref[appendix]{appendix} for
details on Lebesgue derivatives; the measure $\P\otimes H$ is defined as in
\eqref{A-product-measure} there)
\begin{equation}
l_t^{m,n} = \frac{d L_t^{m,n}}{d H_t};\label{l-def}
\end{equation}

\item if $Y_t^m(\omega) = 0$ or $Y_{t-}^m(\omega) =
0$, then $L_s^m(\omega) = L_{t-}^m(\omega)$ and $Y_s(\omega) = 0$ for all
$s\ge t$.
\end{enumerate}
\end{definition}

When in \eqref{capital-eq-diff} we have $|l_t^{\cd,n}(\omega)| = 0$ for some
$\omega,t,n$, we put $l_t^{m,n}(\omega)/|l_t^{\cd,n}(\omega)| = 0$.
Observe that the derivatives $l$ are well-defined, since if $Y$ is an
adapted \cadlag\ process, then $L^{m,n}$ is a predictable process according
to Lemma~\ref{lem-composition}.

As usual, equation~\eqref{capital-eq-diff} should be understood in the
integral sense (a.s.\
for all $t$):
\begin{equation}
\label{capital-eq-integ}
Y_t^m = Y^m_0 - |L_t^m| + \sum_{n=1}^N \int_0^t\frac{l_s^{m,n}}{|l_s^{\cd,n}|} d X_s^n,
\end{equation}
where the integral is understood as a pathwise Lebesgue--Stieltjes
integral. It is well-defined since the process $X$ is \cadlag\ and
non-decreasing, and the integrand is non-negative and bounded.

Let us clarify that we use Lebesgue derivatives in the wealth equation and
not Radon-Nikodym derivatives (e.g.\ $dL^{m,n}_t / d |L^{\cdot,n}_t|$) for
two reasons. First, this allows to differentiate with respect to a process
$H$ not depending on the solution of the equation, which is yet to be found.
Second, it is natural to require that the solution should not depend on what
particular version of the derivatives is used. This is so if $G\ll H$ (see
Proposition~\ref{prop-H-choice} below). Thus, if one would like to use
Radon--Nikodym derivatives, the process $H$ should dominate both the
processes $|L|$ and $G$, which would make formulas rather cumbersome.

Sufficient conditions for the existence and uniqueness of a solution of
equation~\eqref{capital-eq-diff} will be provided in the next section. But
now let us prove a result which shows that the solution, if it exists, does
not depend on the choice of the process $H$ and the versions of the
derivatives $l$.

\begin{proposition}
\label{prop-H-choice}
Suppose $Y$ is a solution of \eqref{capital-eq-diff}, where the derivative
process $l$ is defined as in \eqref{l-def} with respect to some \cadlag\
non-decreasing predictable process $H$ such that $G\ll H$. Then for any
\cadlag\ non-decreasing predictable process $\tilde H$ such that $G\ll\tilde
H$ and any $\P\otimes \tilde H$-version of the derivative $\tilde l = d L/ d
\tilde H$, the process $Y$ also solves \eqref{capital-eq-diff} with $\tilde
l$ in place of $l$.
\end{proposition}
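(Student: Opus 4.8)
The plan is to observe that the dominating process $H$ and the version $l$ enter the wealth equation in its integral form \eqref{capital-eq-integ} only through the term $\sum_{n}\int_0^t \frac{l_s^{m,n}}{|l_s^{\cd,n}|}\,dX_s^n$; the remaining contribution $Y_0^m-|L_t^m|$ is a functional of the strategy profile and the payoff processes alone. So the whole statement reduces to proving that for every $m$ and every $t$, with $\P$-probability one,
\[
\sum_{n=1}^N\int_0^t\frac{l_s^{m,n}}{|l_s^{\cd,n}|}\,dX_s^n
\;=\;
\sum_{n=1}^N\int_0^t\frac{\tilde l_s^{m,n}}{|\tilde l_s^{\cd,n}|}\,dX_s^n .
\]

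The first step is a reduction to comparable dominating processes. Put $\bar H=H+\tilde H$; this process is predictable, \cadlag, non-decreasing, satisfies $G\ll\bar H$, and dominates both $H$ and $\tilde H$. Choosing any $\P\otimes\bar H$-version $\bar l=dL/d\bar H$, it is enough to prove the displayed identity with $(\bar l,\bar H)$ in place of $(\tilde l,\tilde H)$, and then to apply this once to $H$ and once to $\tilde H$. Hence I may and do assume $H\ll\tilde H$, so that $\rho:=dH/d\tilde H$ is an honest non-negative Radon--Nikodym density.

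The heart of the matter is a chain rule for Lebesgue derivatives of non-decreasing random functions, which I would isolate as one of the auxiliary facts in the appendix: because $H\ll\tilde H$, for each $m,n$ one has $\tilde l^{m,n}=\rho\,l^{m,n}$ on $\{\rho>0\}$, $\P\otimes\tilde H$-a.e.\ (the point being that the part of $dL^{m,n}$ singular with respect to $dH$, when intersected with $\{\rho>0\}$, is also singular with respect to $d\tilde H$). Dividing, on $\{\rho>0\}$ one gets $\tilde l_s^{m,n}/|\tilde l_s^{\cd,n}|=l_s^{m,n}/|l_s^{\cd,n}|$ --- with the convention $0/0=0$ applied consistently, since on $\{\rho>0\}$ the coordinates of $\tilde l^{\cd,n}$ and of $l^{\cd,n}$ vanish simultaneously. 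It then remains to check that the exceptional set $\{s:\rho_s=0\}$ is negligible for every integrator, i.e.\ $dX^n(\omega)\big(\{s:\rho_s(\omega)=0\}\big)=0$ for a.a.\ $\omega$ and all $n$. On $\{\rho=0\}$ the measure generated by $H$ is null; since $G\ll H$ the measure generated by $G$ is null there too, and, the increments of $X^{c,n}$ being dominated by those of $|X^c|\le G$, so is the measure generated by the continuous part $X^{c,n}$. For the jump part I would argue that whenever $X$ jumps at a time whose contribution is registered by the compensator $\nu$, one has $\Delta G_\tau>0$, hence $\Delta H_\tau>0$ and $\rho_\tau=\Delta H_\tau/\Delta\tilde H_\tau>0$, so that the jump part of $X^n$ does not charge $\{\rho=0\}$ either. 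Combining, the two integrands agree $dX^n$-a.e.\ on $[0,t]$, which gives the identity above and, through \eqref{capital-eq-integ}, the proposition.

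The step I expect to be the main obstacle is precisely the control of the jump times of $X$ in the last part: since $l$ is pinned down only up to a $\P\otimes H$-version, and since $H$ need not jump where $X$ does, one must argue carefully --- drawing on the appendix's analysis of Lebesgue decompositions of non-decreasing processes --- that the possible disagreement between the two densities lives on a set that is null for every $X^n$. By contrast, the chain rule itself, once stated at the right level of generality, should be a routine consequence of the uniqueness of Lebesgue decompositions, and the reduction to $H\ll\tilde H$ and the isolation of the integral term are purely formal.
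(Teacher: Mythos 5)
Your overall route is the same as the paper's: reduce to showing that the two integrands $F(l)^{m,n}=l^{m,n}/|l^{\cd,n}|$ and $F(\tilde l)^{m,n}$ integrate identically against $X^n$, establish a chain rule for predictable Lebesgue derivatives so that the two densities are proportional (hence give the same ratios) off an exceptional predictable set, and then check that this exceptional set is not charged by $dX^n$. The paper does this slightly more directly, writing $l=\tilde l\,(d\tilde H/dH)$ $\P\otimes G$-a.s.\ with $d\tilde H/dH>0$ $\P\otimes G$-a.s.\ (claims (b),(c) of the appendix proposition), which avoids your preliminary reduction through $\bar H=H+\tilde H$; but that reduction is harmless and your chain rule on $\{\rho>0\}$ is correct.

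There is, however, a genuine flaw in the one step you yourself flag as the main obstacle: the treatment of the jump part of $X$ on the exceptional set. You argue pathwise that ``whenever $X$ jumps at a time whose contribution is registered by the compensator, one has $\Delta G_\tau>0$,'' and conclude that the jump part of $X^n$ cannot charge $\{\rho=0\}$. This is false for totally inaccessible jump times: if, say, $X$ is a compound Poisson process, then $\nu(dt,dx)$ is absolutely continuous in $t$, so $G$ is continuous and $\Delta G_\tau=0$ at every jump time of $X$, yet $X$ certainly jumps. The implication ``$\Delta X_\tau\neq0\Rightarrow\Delta G_\tau>0$'' holds only at predictable jump times, so your pathwise argument does not cover the quasi-left-continuous part of $X$. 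The correct argument is distributional rather than pathwise: the exceptional set $N$ (the union of $\{\rho=0\}$ with the $\P\otimes\tilde H$-null set where the chain rule may fail) is predictable and $\P\otimes G$-null, hence $(|x|\wedge1)\I(N)*\nu_\infty=0$ a.s.\ because $(|x|\wedge1)*\nu\ll G$ by the definition of $G$; taking expectations and using the defining property of the compensator for non-negative predictable integrands gives $\E\bigl[(|x|\wedge1)\I(N)*\mu_\infty\bigr]=\E\bigl[(|x|\wedge1)\I(N)*\nu_\infty\bigr]=0$, so a.s.\ no jump of $X$ occurs at times in $N$. This is exactly what the paper's remark ``if $f=f'$ $\P\otimes G$-a.s., then $f\cint X^n=f'\cint X^n$'' encapsulates. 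With that repair your proof is complete and equivalent to the paper's.
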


\begin{proof}
Let $F\colon \R^{MN} \to \R^{MN}$ denote the function which specifies the
distribution of payoffs in \eqref{capital-eq-diff}:
\begin{equation}
F(l)^{m,n} = \frac{l^{m,n}}{|l^{\cd,n}|},\label{F-def}
\end{equation}
where $F(l)^{m,n} = 0$ if $|l^{\cd,n}|=0$. As follows from
\eqref{capital-eq-integ}, we have to show that for each $m,n$
\begin{equation}
F(l)^{m,n}\cint X^n = F(\tilde l)^{m,n}\cint X^n,\label{prop-H-choice-1}
\end{equation}
where $F(l)$ denotes the process $F(l_t(\omega))$, and $F(\tilde l)$ denotes
$F(\tilde l_t(\omega))$.

One can see that if $f,f'\ge 0$ are predictable processes such that $f=f'$
$\P\otimes G$-a.s., then $f\cint X^n = f'\cint X^n$.
We have
\[
l^{m,n} = \dd{L^{m,n}}{H} =  \dd{L^{m,n}}{\tilde H} \dd{\tilde H}{H} =
\tilde l^{m,n} \dd{\tilde H}{H} \quad \text{$\P\otimes G$-a.s.},
\]
where the second equality holds in view of claim (b) of
Proposition~\ref{A-properties} from the \hyperref[appendix]{appendix}. Since $d\tilde H/d H > 0$
$\P\otimes G$-a.s. by claim (c) of Proposition~\ref{A-properties}, we have
$F(l)^{m,n}= F(\tilde l)^{m,n}$, so \eqref{prop-H-choice-1} holds, which
finishes the proof.
\end{proof}

\subsection{A sufficient condition for the existence and uniqueness of a solution of the wealth equation}
The following theorem provides a sufficient condition for the existence and
uniqueness of a solution of equation \eqref{capital-eq-diff}. Note that the
main results of our paper, formulated in Section~\ref{sec-main}, do not
require this condition to hold (they only require a unique solution to
exist), and may be valid under less strict assumptions.

\begin{theorem}
\label{th-existence}
Suppose that for each $m$ a strategy $\bL^m$ of investor $m$ satisfies the
following two conditions.
\begin{enumerate}[leftmargin=*,label=(C\arabic*)]
\item \label{th-existence-C1} There exists a $\PP\otimes\B(\R^M_+)$-measurable
function $v^{m}_t(\omega,z)$ with values in $\R^N_+$ such that for all
$\omega\in\Omega$, $y\in D$, $t\in\R_+$, $n=1,\ldots,N$,
\begin{equation}
\bL^{m,n}_t(\omega,y) = \int_0^t v^{m,n}_s(\omega,y_{s-})
\I(\inf_{u<s}y_u^m >0) d G_s(\omega)\label{C1-1}
\end{equation}
(for $s=0$, we put $y_{0-}=y_0$), and for all $\omega\in\Omega$, $z\in\R^M_+$, $t\in\R_+$ 
\begin{equation}
|v^{m}_t(\omega,z)|\Delta G_t(\omega) \le z^m.\label{C1-2}
\end{equation}

\item \label{th-existence-C2} There exist sets $\Pi^{m,n}\in\PP$,
$n=1,\ldots,N$, a non-random function $C^m\colon \R_+^M \to (0,\infty)$, and
a predictable \cadlag\ process $\delta^m>0$ such that
\begin{equation}
v^{m,n}_t(\omega,z) = 0\ \text{for all $(\omega,t)\in \Pi^{m,n}$ and $z\in
\R_+^N$},\label{C2-1}
\end{equation}
and for all $(\omega,t)\notin \Pi^{m,n}$ and $z,\tilde z,a\in\R^M_+$
such that $z^k,\tilde z^k \in  [a^k/2,2a^k]$ for all $k$, it holds that 
\begin{align}
&v^{m,n}_t(\omega,z) \ge (C^m(a)\delta_t^m(\omega))^{-1}\text{ if
$z^m>0$},\label{C2-2}\\
&v^{m,n}_t(\omega,z) \le C^m(a)\delta^m_t(\omega),
\label{C2-3}\\
&|v^{m,n}_t(\omega,z) - v^{m,n}_t(\omega,\tilde z)| \le C^m(a)\delta_t^m(\omega)
|z-\tilde z|.  \label{C2-4}
\end{align}
\end{enumerate}
Then for any vector of initial wealth $y_0\in\RMpp$ and a predictable
non-decreasing \cadlag\ process $H$ such that $G\ll H$, equation
\eqref{capital-eq-diff} has a unique solution (up to
$\P$-indis\-tin\-gui\-sha\-bility).
\end{theorem}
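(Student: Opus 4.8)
The plan is to reformulate the wealth equation as a fixed-point problem and apply a Picard-type iteration. First I would note that, by conditions \ref{th-existence-C1} and the structure \eqref{capital-eq-diff}, once we substitute $L^m_t(\omega)=\bL^m_t(\omega,Y(\omega))$ the equation \eqref{capital-eq-integ} becomes an integral equation of the form $Y_t = y_0 - \bigl(\text{terms built from }v^m(\cdot,Y_{s-})\bigr)\cint G_t + \bigl(F(l)\cd X\bigr)_t$, where all integrators are the fixed non-decreasing \cadlag\ processes $G$ and $X$, and the dependence on $Y$ enters only through the past values $Y_{s-}$. This is a (random) Volterra-type integral equation driven by $G$, and the natural approach is to solve it pathwise on a short time interval by contraction and then extend.

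The key steps, in order, would be: (i) Fix $\omega$ and work pathwise. Define the candidate solution map $\Phi$ on the space of \cadlag\ functions $y\colon[0,T]\to\R^M_+$ by letting $\Phi(y)_t$ be the right-hand side of \eqref{capital-eq-integ} with $v^m_s(\cdot,y_{s-})$ in place of the derivative data; one must check that $\Phi$ maps nonnegative \cadlag\ functions to nonnegative \cadlag\ functions — nonnegativity of the cash-spending term is exactly what \eqref{C1-2} guarantees (the amount $|v^m_t|\Delta G_t$ spent at a jump of $G$ never exceeds the current wealth $z^m$), and one also uses part 2) of Definition~\ref{def-feasible-profile} to freeze the dynamics once a coordinate hits zero. (ii) Establish a local Lipschitz estimate: using the Lipschitz bound \eqref{C2-4} on $v^{m,n}$ in the $z$-variable, together with the boundedness \eqref{C2-3}, show that for $y,\tilde y$ staying in a box $[a^k/2,2a^k]$ one has $|\Phi(y)_t-\Phi(\tilde y)_t|\le C\!\int_0^t |y_{s-}-\tilde y_{s-}|\,dG_s$ for a constant $C$ depending on $a$, on $C^m(a)$, on $\sup_{s\le T}\delta^m_s$, and on the total variation of $X$ on $[0,T]$. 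Care is needed with the map $F$ of \eqref{F-def}: it is not globally Lipschitz near $|l^{\cd,n}|=0$, so one must use the lower bound \eqref{C2-2} (valid off the sets $\Pi^{m,n}$ where $v^{m,n}\equiv 0$, on which $F$-contributions vanish identically) to keep the denominator $|l^{\cd,n}|$ bounded away from $0$, making $F$ Lipschitz on the relevant region. (iii) A Gronwall-type argument for Stieltjes integrators (e.g.\ the generalized Gronwall lemma for driving measures with possibly atoms, which requires the single-atom control $C\Delta G_s<1$ — again provided by \eqref{C1-2}) then yields uniqueness on $[0,T]$ and, by iterating $\Phi$, existence of a fixed point on a small interval where $Y$ stays in the box. (iv) Patch the local solutions together: restart at the endpoint of each interval with the new initial vector, handle coordinates that reach $0$ by freezing them per part 2), and show the construction exhausts $[0,\infty)$ because $G$ is finite on compacts, so only finitely many patching steps are needed on any $[0,T]$. (v) Finally invoke Lemma~\ref{lem-composition} to confirm the resulting $L^m_t(\omega)=\bL^m_t(\omega,Y(\omega))$ is predictable and the derivatives $l$ in \eqref{l-def} are well-defined, and use Proposition~\ref{prop-H-choice} to note the solution does not depend on $H$.

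I expect the main obstacle to be step (ii)–(iii): controlling the non-Lipschitz normalization $F(l)^{m,n}=l^{m,n}/|l^{\cd,n}|$ near the degenerate set, and running a Gronwall argument against a \emph{general} non-decreasing \cadlag\ integrator $G$ rather than Lebesgue measure. The atoms of $G$ are the delicate point — both the nonnegativity of wealth at a jump and the contraction constant being $<1$ at an atom hinge on the inequality $|v^m_t|\Delta G_t\le z^m$ from \eqref{C1-2}, so that condition is doing real work and must be threaded through every estimate. A secondary technical nuisance is bookkeeping the indicator $\I(\inf_{u<s}y^m_u>0)$ from \eqref{C1-1} and the absorption condition 2), i.e.\ showing the solution genuinely stays in $\R^M_+$ and that ``freezing at zero'' is consistent with the fixed-point map; this is routine but must be stated carefully.
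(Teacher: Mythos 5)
Your overall strategy --- localize, run a pathwise contraction/Picard argument for the Volterra-type equation driven by $G$ and $X$, control the normalization $F$ via the lower bound \eqref{C2-2} off the sets $\Pi^{m,n}$, and patch --- is the same as the paper's, and your identification of where \eqref{C1-2} and \eqref{C2-2} do real work is accurate. However, step (iv) contains a genuine gap: the claim that ``only finitely many patching steps are needed on any $[0,T]$ because $G$ is finite on compacts'' fails in general. The length of each contraction interval is governed by the local Lipschitz constant $C^m(a)\delta^m$, where $a$ is the wealth vector at the start of the interval, and the theorem does \emph{not} assume $C^m(a)$ stays bounded as $a^m\to0$ (it cannot, if the theorem is to apply to the strategy $\hat\bL$ --- see the explicit $C(a)$ in the proof of Proposition~\ref{pr-Lhat-conditions}). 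Hence, when some investor's wealth tends to $0$ continuously, the patching times accumulate at a finite time strictly before $T$, and ``freezing coordinates that reach $0$'' never triggers because the wealth is strictly positive at every patching point. The paper resolves this with a two-level induction: for each level $i$ it builds countably many stopping times $\tau_{i,j}$ increasing to $\tau_{i+1,0}$, proves that the left limit of the solution exists at the accumulation point (the integral terms in \eqref{capital-eq-integ} are monotone and bounded by $|X_{\tau_{i+1,0}}|$), extends the solution across $\tau_{i+1,0}$ by an explicit one-step formula with the newly bankrupt investors switched off, and observes that at most $M$ such levels can occur. You need to add this limiting argument.

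A secondary inaccuracy: you attribute the single-atom control needed for a Stieltjes--Gronwall lemma to \eqref{C1-2}, but that inequality only bounds the amount spent at an atom (hence nonnegativity of wealth), not the Lipschitz constant of the fixed-point map there; a large atom of $G$ or $X$ can push that constant above $1$. The clean fix, which is what the paper does, is to choose the stopping times so that the contraction runs on the open interval up to the next large atom, and to determine the value at the atom by an explicit one-step map applied to the left limit, for which no contraction is needed.
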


The proof is provided in Section~\ref{sec-proofs}. Let us comment on the conditions
imposed in the theorem.

In condition~\ref{th-existence-C1}, equation \eqref{C1-1} restricts the
class of strategies under consideration to strategies that from the whole
information of investors' past wealth use only the knowledge of the current
wealth $y_{s-}$, on which depend the ``instantaneous'' investment rates
$v_t^{m,n}$. The indicator in the integrand appears for the purpose of
ensuring that the process $Y^m$ is non-negative: if $Y_u^m$ or $Y_{u-}^{m}$
become zero for some $u$, such a strategy stops investing afterwards. For
the same reason we require \eqref{C1-2} to hold, which means that an
investor cannot spend more money than available. Note that
\ref{th-existence-C1} implies that the realization of the strategy is
absolutely continuous with respect to $G$, i.e.\ $L^m \ll G$, which is a
reasonable requirement since if a strategy does not have this property, then
it ``wastes'' money (invests in assets when the expected payoff is zero).

Condition~\ref{th-existence-C2} is needed because the proof is based on a
contraction mapping argument. Inequalities \eqref{C2-2}--\eqref{C2-3} are
analogous to similar upper and lower bounds on equation coefficients in such
proofs, while \eqref{C2-4} is a Lipschitz continuity condition. Note that it
would be too restrictive to require $v_t^{m,n}$ to be bounded away from zero
globally in inequality \eqref{C2-2}. Indeed, if asset $n$ does not yield a
payoff ``predictably'' at time $t$, it would be natural to take
$v_t^{m,n}=0$. Therefore, we relax the lower bound on $v^{m}$ by introducing
the sets $\Pi^{m,n}$ where $v^{m,n}$ may vanish.

The conditions of the theorem may look cumbersome, but it is possible to
verify that certain strategies satisfy them. In particular, in
Section~\ref{sec-candidate-optimal} we do that for a candidate optimal
strategy under mild additional assumptions.

\section{Optimal strategies}
\label{sec-optimal-strategy}

\subsection{Definition}
If a strategy profile and a vector of initial wealth are feasible, we define
the relative wealth of investor $m$ as the process
\[
r_t^m = \frac{Y_t^m}{|Y_t|},
\]
where $r_t^m(\omega) = 0$ if $|Y_t(\omega)| = 0$.

We will be interested in finding strategies for which the relative wealth of
an investor grows on average in the following sense.

\begin{definition}
For a given payoff process $X$, we call a strategy $\bL$ \emph{relative
growth optimal} for investor $m$, if for any feasible initial wealth and a
strategy profile where investor $m$ uses this strategy, it holds that $Y_t^m
> 0$ for all $t\ge 0$ and $\ln r^m$ is a submartingale.
\end{definition}

Observe that if a strategy is relative growth optimal, then also $r^m$ is a
submartingale by Jensen's inequality. Another corollary from the relative
growth optimality is that such a strategy is a \emph{survival strategy} in
the sense that the relative wealth of an investor who uses it always stays
bounded away from zero,
\begin{equation}
\inf_{t\ge 0} r_t^m >0 \text{ a.s.},\label{survival}
\end{equation}
(we use the terminology of \cite{AmirEvstigneev+13}; note that, for example,
in \cite{BlumeEasley92}, the term ``survival'' has a somewhat different
meaning). This follows from the fact that $\ln r^m$ is a non-positive
submartingale, and hence it has a finite limit $z = \lim_{t\to\infty} \ln
r_t^m$. Therefore, $\lim_{t\to\infty} r_t^m = e^z > 0$.

It is worth mentioning that the survival property \eqref{survival} also
implies that an investor who uses such a strategy achieves a not slower
asymptotic growth rate of wealth than any other investor in the market, i.e.
for any $k$
\begin{equation}
\limsup_{t\to\infty} \frac1t{\ln Y_t^m} \ge \limsup_{t\to\infty} \frac1t{\ln
Y_t^k}\ \text{a.s.}
\label{asymp-growth-opt}
\end{equation}
This property is analogous to the notion of asymptotic growth optimality in
single-investor market models (see, e.g., Section~3.10 in
\cite{KaratzasShreve98}). The validity of \eqref{asymp-growth-opt} follows
from that $\sup_{t\ge 0} |Y_t|/Y_t^m < \infty$ by \eqref{survival}, so
$\sup_{t\ge0} Y_t^k/Y_t^m < \infty$ for any $k$. Hence $ \limsup_{t\to
\infty} t^{-1}\ln({Y_t^k}/{Y_t^m}) \le 0$, from which one can
obtain~\eqref{asymp-growth-opt}.

\subsection{A candidate relative growth optimal strategy}
\label{sec-candidate-optimal}
Denote by $\nu_{\{t\}}$ the predictable random measure on $\B(\R_+^N)$
defined by
\[
\nu_{\{t\}}(\omega, A) = \nu(\omega, \{t\}\times A), \qquad A \in
\B(\R_+^N),
\]
and introduce the predictable process
\[
\bnu_t = \nu_{\{t\}}(\R_+^N).
\]
One can see that $\bnu_t$ is the conditional probability of a jump of the
process $X_t$ given the $\sigma$-algebra $\F_{t-}$ \cite[Proposition
II.1.17]{JacodShiryaev02}, i.e. $ \bnu_t = \P(\Delta X_t \neq 0 \mid
\F_{t-})$. We will always assume that a ``good'' version of the compensator
is chosen -- such that $\bnu_t(\omega) \in [0,1]$ for all $\omega,t$.

The candidate relative growth optimal strategy, which we define below,
will behave differently at points $t$ where $\bnu_t = 0$ and where $\bnu_t>0$.
To deal with them, let us partition
$\Omega\times\R_+\times(0,\infty)$ into the following three sets
belonging to $\PP\otimes\B(\R_+)$:
\begin{align*}
&\Gamma_0 = \{(\omega,t,c) : \bnu_t(\omega) = 0\},\\
&\Gamma_1 = \biggl\{(\omega,t,c) : 0<\bnu_t(\omega) < 1, \ \text{or}\
\bnu_t(\omega)=1\ \text{and}\ 
\int_{\R_+^N} \frac{c}{|x|} \nu_{\{t\}}(\omega,dx) >  1 \biggr\},\\ 
&\Gamma_2 = \biggl\{(\omega,t,c) : \bnu_t(\omega) = 1\ \ \text{and}\ \   
\int_{\R_+^N} \frac{c}{|x|} \nu_{\{t\}}(\omega,dx) \le 1\biggr\}.
\end{align*}
In the definition of the optimal strategy, the argument $c$ in the triple
$(\omega,t,c)$ will correspond to the value of the total wealth of all the
investors right before time $t$, i.e.\ $|Y_{t-}|$ (points $(\omega,t,c)$
with $c=0$ are not included in any of the sets; it will be easier to deal
with them separately). Roughly speaking, the sets $\Gamma_i$ differ in the
conditional size of possible jumps of the payoff process $X$. On $\Gamma_0$,
the conditional probability of a jump is zero. On $\Gamma_2$, only ``large''
jumps of $X$ occur (large relatively to the current total wealth), and
$\Gamma_1$ is the remaining set where both ``small'' and ``large'' jumps can
occur.

The next lemma defines an auxiliary function $\zeta$ which will be needed to
specify what proportion of wealth the optimal strategy keeps in cash.
\begin{lemma}
\label{lem-zeta-def}
For each $(\omega,t,c)\in\Gamma_1$, there exists a unique solution
$z^*(\omega,t,c)\in(0,c)$ of the equation
\begin{equation}
\int_\RNP \frac{c}{z+|x|} \nu_{\{t\}}(\omega,dx) =
1-\frac{c}{z}(1-\bnu_t(\omega)).
\label{zeta-def}
\end{equation}
The function $\zeta(\omega,t,c)$ defined on $\Omega\times\R_+\times\R_+$ by
\begin{equation}
\zeta = c \I(\Gamma_0) + z^*\I(\Gamma_1)\label{zeta-def-2}
\end{equation}
is $\PP\otimes\B(\R_+)$-measurable.
\end{lemma}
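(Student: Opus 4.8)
The plan is to establish the two assertions of the lemma in turn: the pointwise existence and uniqueness of $z^*$ for a fixed $(\omega,t,c)\in\Gamma_1$, and then the $\PP\otimes\B(\R_+)$-measurability of $\zeta$. The first part is a one‑dimensional monotonicity argument; the real work is in the second part, and within it the only non‑formal point is that $c/(z+|x|)$ need not be $\nu$-integrable over $[0,t]$.

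\emph{Existence and uniqueness.} Fix $(\omega,t,c)\in\Gamma_1$ and, for $z>0$, set
\[
g(z)=\int_\RNP\frac{c}{z+|x|}\,\nu_{\{t\}}(\omega,dx)+\frac{c}{z}\bigl(1-\bnu_t(\omega)\bigr),
\]
so that \eqref{zeta-def} reads $g(z)=1$. Since the jumps of $X$ are nonzero, $\nu_{\{t\}}(\omega,\cdot)$ does not charge $\{0\}$, and on $\Gamma_1$ it is a nonzero finite measure of total mass $\bnu_t(\omega)\in(0,1]$. The integrand is bounded by $c/z$, so $g$ is finite, continuous on $(0,\infty)$ by dominated convergence, and strictly decreasing there, since each $c/(z+|x|)$ strictly decreases in $z$ and $\frac{c}{z}(1-\bnu_t(\omega))$ is non‑increasing. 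For the behaviour near $0$: if $\bnu_t(\omega)<1$ then $g(z)\to+\infty$ as $z\downarrow0$, and if $\bnu_t(\omega)=1$ then $g(z)\uparrow\int_\RNP\frac{c}{|x|}\nu_{\{t\}}(\omega,dx)>1$ by monotone convergence and the definition of $\Gamma_1$. At $z=c$, using $\nu_{\{t\}}(\{0\})=0$,
\[
g(c)=\int_\RNP\frac{c}{c+|x|}\,\nu_{\{t\}}(\omega,dx)+\bigl(1-\bnu_t(\omega)\bigr)<\bnu_t(\omega)+\bigl(1-\bnu_t(\omega)\bigr)=1 .
\]
Hence $g$ is continuous and strictly decreasing on $(0,c]$, with $g>1$ near $0$ and $g(c)<1$, so the intermediate value theorem yields a unique $z^*=z^*(\omega,t,c)\in(0,c)$ with $g(z^*)=1$.

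\emph{Measurability.} In \eqref{zeta-def-2} the summand $c\,\I(\Gamma_0)$ is $\PP\otimes\B(\R_+)$-measurable because $\Gamma_0\in\PP\otimes\B(\R_+)$ and $(\omega,t,c)\mapsto c$ is measurable, so it remains to treat $z^*\,\I(\Gamma_1)$. The key point is that strict monotonicity of $g$ converts a level set of $z^*$ into an explicit inequality: for every $a>0$, on $\Gamma_1$ one has $z^*\le a\iff g(a)\le g(z^*)=1$ (and $\{z^*\le a\}=\varnothing$ for $a\le0$, as $z^*>0$), whence
\[
\{(\omega,t,c)\in\Gamma_1:\ z^*(\omega,t,c)\le a\}=\Gamma_1\cap\bigl\{(\omega,t,c):\ g(a)\le 1\bigr\},\qquad a>0 .
\]
Since $\{\,z^*\,\I(\Gamma_1)\le a\,\}$ equals $\Gamma_1^{\,c}\cup(\Gamma_1\cap\{z^*\le a\})$ for $a\ge0$ and is empty for $a<0$, it suffices to show that for each fixed $a>0$ the map $(\omega,t,c)\mapsto g(a)$ is $\PP\otimes\B(\R_+)$-measurable.

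\emph{Reducing $g(a)$ to a jump of a predictable process.} Writing $\frac{c}{a+|x|}=\frac{c}{a}-\frac{c}{a}\,\frac{|x|}{a+|x|}$ and using $\nu_{\{t\}}(\RNP)=\bnu_t$, one obtains
\[
g(a)=\frac{c}{a}\bnu_t(\omega)-\frac{c}{a}\,\Delta K^a_t(\omega)+\frac{c}{a}\bigl(1-\bnu_t(\omega)\bigr)=\frac{c}{a}\bigl(1-\Delta K^a_t(\omega)\bigr),\qquad K^a_t:=\int_0^t\!\!\int_\RNP\frac{|x|}{a+|x|}\,\nu(ds,dx).
\]
Here is where the integrability caveat is handled: although $c/(a+|x|)$ need not be $\nu$-integrable, the integrand of $K^a$ satisfies $0\le\frac{|x|}{a+|x|}\le(1\wedge a)^{-1}(|x|\wedge1)$ and $(|x|\wedge1)*\nu$ is finite (it is part of the definition of $G$), so $K^a$ is a finite, non‑decreasing, \cadlag, predictable process (integration against the predictable random measure $\nu$; cf.\ \cite{JacodShiryaev02}); consequently $\Delta K^a=K^a-K^a_-$ is predictable and $\Delta K^a_t(\omega)=\int_\RNP\frac{|x|}{a+|x|}\nu_{\{t\}}(\omega,dx)$ by the definition of $\nu_{\{t\}}$. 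Therefore $(\omega,t,c)\mapsto g(a)=\frac{c}{a}(1-\Delta K^a_t(\omega))$ is $\PP\otimes\B(\R_+)$-measurable, which by the previous paragraph completes the proof. The main obstacle is precisely this measurability step — the naive implicit‑function route is awkward, and the clean approach is to pass through the inequality $g(a)\le1$ and the genuinely predictable process $K^a$.
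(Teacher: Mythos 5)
Your proof is correct, and while the existence–uniqueness part is essentially the paper's argument (the same monotonicity and intermediate-value comparison at $z\downarrow 0$ and $z=c$, just with the boundary behaviour written out), the measurability part takes a genuinely different route. The paper forms the Carath\'eodory function
$f(\omega,t,c,z)=\bigl(\int_\RNP\frac{c}{z+|x|}\,\nu_{\{t\}}(\omega,dx)-1+\frac{c}{z}(1-\bnu_t(\omega))\bigr)\I(\Gamma_1)\wedge 1$
and invokes Filippov's measurable implicit function theorem to extract a measurable selector of $\{z\in[0,c]:f=0\}$, which is single-valued on $\Gamma_1$. You instead use the strict monotonicity of $g$ to turn the sublevel set $\{z^*\le a\}$ into the explicit predictable$\,\otimes\,$Borel set $\{g(a)\le 1\}$, and then check measurability of $(\omega,t,c)\mapsto g(a)$ for each fixed $a>0$ directly via the identity $g(a)=\tfrac{c}{a}\bigl(1-\Delta K^a_t\bigr)$ with $K^a=\frac{|x|}{a+|x|}*\nu$ a finite non-decreasing predictable process; all steps there (the bound $\frac{|x|}{a+|x|}\le(1\wedge a)^{-1}(|x|\wedge1)$, the predictability of $\Delta K^a$, and the equivalence $z^*\le a\Leftrightarrow g(a)\le 1$ on $\Gamma_1$) are sound. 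What each approach buys: the paper's is shorter to state but leans on a selection theorem, and in fact checking that $f$ is Carath\'eodory already requires the very measurability-in-$(\omega,t,c)$-for-fixed-$z$ fact you establish through $K^a$; your argument makes that step explicit and then dispenses with the selection theorem entirely by exploiting monotonicity, at the cost of a slightly longer write-up. Both proofs are complete.
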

\begin{proof}
For $(\omega,t,c)\in\Gamma_1$, the left-hand side of~\eqref{zeta-def} is a
strictly decreasing continuous function of $z$, while the right-hand side is
a non-decreasing continuous function of $z$.
The existence and uniqueness of the solution $z^*$ then follows from
comparison of their values at $z=c$ and $z\to0$.

To prove the measurability of $\zeta$, consider the function $f$ defined on
$\Omega\times \R_+^3\to \R$ by
\[
f(\omega,t,c,z) = \biggl(
\int_\RNP \frac{c}{z+|x|} \nu_{\{t\}}(\omega,dx) -1 +
\frac{c}{z}(1-\bnu_t(\omega)) \biggr) \I((\omega,t,c)\in \Gamma_1) \wedge 1.
\]
Observe that $f$ is a Carath\'eodory function, i.e.
$\PP\otimes\B(\R_+)$-measurable in $(\omega,t,c)$ and continuous in $z$.
Then by Filippov's implicit function theorem (see, e.g.,
\cite[Theorem~18.17]{AliprantisBorder06}), the set-valued function
\[
\phi(\omega,t,c) = \{z\in[0,c]  : f(\omega,t,c,z)=0\}
\]
admits a measurable selector. Since $\phi$ on $\Gamma_1$ is single-valued
($\phi(\omega,t,c) = \{z^*(\omega,t,c)\}$), this implies the
$\PP\otimes\B(\R_+)$-measurability of $\zeta$.
\end{proof}

It is known that there exists a predictable process $b_t$ with values in
$\R^N_+$ and a transition kernel $K_{\omega,t}(dx)$ from
$(\Omega\times\R_+,\PP)$ to $(\R^N_+,\B(\R^N_+))$ such that up to
$\P$-indis\-tin\-gui\-sha\-bility
\begin{equation}
\label{decomp}
X_t^{c}(\omega) = b\cint  G_t(\omega), \qquad \nu(\omega,dt,dx) = K_{\omega,t}(dx) d
 G_t(\omega).
\end{equation}
Since the filtration is complete, we can assume \eqref{decomp} holds for all
$\omega,t$. Also, it will be convenient to select ``good'' versions of $b$
and $K$, which satisfy the following conditions for all $(\omega,t)$ (it is
always possible to select such versions, see, e.g., \cite[Proposition
II.2.9]{JacodShiryaev02}):
\begin{align}
&|b_t(\omega)| = 0\ \text{if}\ \Delta G_t(\omega)>0,\quad
K_{\omega,t}(\{0\}) = 0,\notag\\
&|b_t(\omega)| + \int_\RNP (1\wedge|x|) K_{\omega,t}(dx) = 1.\label{good-version}
\end{align}
Define the $\PP\otimes\B(\R_+)$-measurable function
$\hat\lambda(\omega,t,c)$ with values in $\R_+^N$:
\begin{equation}
\hat \lambda_t(0) = 0,\qquad
\hat \lambda_t(c) = \frac{b_t}{c} +  \int_{\R_+^N}
\frac{x}{\zeta_t(c)+|x|} K_t(dx) \ \ \text{for}\ \ c>0
\label{lambda-hat}
\end{equation}
(the argument $\omega$ is omitted for brevity). Now we are in a position to
introduce the strategy, which will be shown to be relative growth optimal.
When used by investor $m$, its cumulative investment process is defined by
\begin{equation}
\hat \bL_t(y) = \int_0^t y_{s-}^m \hat\lambda_s(|y_{s-}|)d G_s
\label{L-hat}
\end{equation}
(for $s=0$, put $y_{0-} = y_0$). When it is necessary to emphasize that this
strategy, as a function of $y$, depends on which investor uses it, we will
use the notation $\hat\bL_t^m(y)$.

Generally speaking, the strategy $\hat\bL$ resembles optimal strategies in
other models in evolutionary finance, as they all split investment budget
between assets proportionally to expected asset payoffs (but quantitatively
they differ in how these proportions are calculated). In the particular case
when the payoff process $X_t$ is discrete-time (as in
Section~\ref{sec-discrete-model}), we obtain the same strategy that was found
in \cite{DrokinZhitlukhin20}. Formally, the discrete-time case can be
included in the general model by taking a process $X_t$ such that $X_t =
\sum_{s=0}^{\lfloor t\rfloor} \Delta X_s$; then $b=0$ and $K_t(dx)$ is the
conditional distribution of the jump $\Delta X_t$ for integer $t$.

To conclude this section, we state a proposition which provides sufficient
conditions of feasibility of a strategy profile where one or several
investors use the strategy $\hat \bL$. It is based on
Theorem~\ref{th-existence}, but we show that the conditions of that theorem
hold automatically for $\hat\bL$ under some mild additional assumptions on
the payoff process. In particular, if these assumptions hold, then a
strategy profile where all the investors use the strategy $\hat\bL$ is feasible
(we will consider such profiles in Theorem~\ref{th-equilibrium} in the next
section).

Define the predictable process with values in $\RNP$ 
\begin{equation}
h_t = b_t + \int_\RNP \frac{x}{1+|x|} K_t(dx),\label{process-h}
\end{equation}
and define the scalar predictable process
\[
p_t = \int_\RNP \frac{\nu_{\{t\}} (dx)}{(1+|x|)^2}.
\]

\begin{proposition}
\label{pr-Lhat-conditions}
Suppose the process $(p_t \Delta G_t)^{-1}\I(\Delta G_t > 0)$ is locally
bounded and for each $n$ the process $(h_t^n)^{-1} \I(h_t^n>0)$ is locally
bounded (where $0/0=0$ for these processes). Then any strategy profile, in
which every investor uses either the strategy $\hat\bL$ or a strategy which
satisfies the conditions of Theorem~\ref{th-existence}, is feasible for any
initial wealth $y_0\in\RMpp$.
\end{proposition}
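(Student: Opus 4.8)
I would show that $\hat\bL$ of~\eqref{L-hat} itself satisfies conditions~\ref{th-existence-C1} and~\ref{th-existence-C2} of Theorem~\ref{th-existence}, after a localization; then in any profile allowed by the proposition every strategy satisfies these conditions, so Theorem~\ref{th-existence} yields a unique wealth process, i.e.\ feasibility. Write $\hat\bL$ in the form~\eqref{C1-1} with $v^{m,n}_t(\omega,z)=z^m\hat\lambda^n_t(|z|)$ (suppressing $\omega$); the indicator in~\eqref{C1-1} is immaterial because, by part~2) of Definition~\ref{def-feasible-profile}, it only affects the strategy after $Y^m$ has hit $0$. This $v^m$ is $\PP\otimes\B(\R^M_+)$-measurable by Lemma~\ref{lem-zeta-def} and the predictability of $b,K$, and \eqref{C1-2} holds because, using $|b_t|\Delta G_t=0$ from~\eqref{good-version} and $\nu=K\,dG$,
\[
|v^m_t(\omega,z)|\,\Delta G_t=z^m\,\Delta G_t\!\int_\RNP\frac{|x|}{\zeta_t(|z|)+|x|}\,K_t(dx)\le z^m\!\int_\RNP\nu_{\{t\}}(dx)=z^m\bnu_t\le z^m;
\]
so~\ref{th-existence-C1} holds.

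For~\ref{th-existence-C2} I would take $\Pi^{m,n}=\{h^n=0\}\in\PP$ with $h$ from~\eqref{process-h}; there $b^n_t=0$ and $x^n=0$ $K_t$-a.s., so $\hat\lambda^n_t\equiv0$ and~\eqref{C2-1} holds. Fix $a\in\R^M_+$ with $a^m>0$ (for $a^m=0$ all remaining clauses are vacuous) and $z,\tilde z$ with $z^k,\tilde z^k\in[a^k/2,2a^k]$, so $c:=|z|$ and $\tilde c:=|\tilde z|$ lie in $[\,|a|/2,2|a|\,]$. Everything reduces to three estimates for $\hat\lambda^n_t(c)$ on this $c$-interval, uniform in $(\omega,t)$. \emph{(a)}~Since $\zeta_t(c)\le c\le2|a|$ by~\eqref{zeta-def-2}, $\zeta_t(c)+|x|\le(2|a|\vee1)(1+|x|)$, whence (comparing~\eqref{lambda-hat} with~\eqref{process-h}) $\hat\lambda^n_t(c)\ge(2|a|\vee1)^{-1}h^n_t$, so off $\Pi^{m,n}$, $v^{m,n}_t(\omega,z)\ge\tfrac12 a^m(2|a|\vee1)^{-1}h^n_t$. \emph{(b)}~On $\{\bnu_t>0\}$ one has $\Delta G_t>0$ (since $\nu=K\,dG$) and $b_t=0$, so $\hat\lambda^n_t(c)=\Delta G_t^{-1}\!\int_\RNP x^n(\zeta_t(c)+|x|)^{-1}\nu_{\{t\}}(dx)\le\bnu_t/\Delta G_t\le1/\Delta G_t$; on $\{\bnu_t=0\}$ (where $\zeta_t(c)=c$) a direct split of the $K_t$-integral at $|x|=1$, using $\int(1\wedge|x|)K_t(dx)\le1$, gives $\hat\lambda^n_t(c)\le\mathrm{const}(|a|)$; hence $v^{m,n}_t(\omega,z)\le 2a^m\max(\mathrm{const}(|a|),\Delta G_t^{-1})$. \emph{(c)}~For the Lipschitz estimate in $c$: on $\{\bnu_t=0\}$ ($\zeta_t(c)=c$) and on $\Gamma_2$ ($\zeta_t(c)=0$, so $\hat\lambda^n_t$ is constant in $c$) the bound $|\partial_c\hat\lambda^n_t(c)|\le\mathrm{const}(|a|)$ is elementary; on $\Gamma_1$, differentiating~\eqref{zeta-def} implicitly,
\[
\partial_c z^*=\frac{\displaystyle\int_\RNP\frac{\nu_{\{t\}}(dx)}{z^*+|x|}+\frac{1-\bnu_t}{z^*}}{\displaystyle c\!\int_\RNP\frac{\nu_{\{t\}}(dx)}{(z^*+|x|)^2}+\frac{c(1-\bnu_t)}{(z^*)^2}}\,,
\]
whose numerator equals $1/c$ by~\eqref{zeta-def} and whose denominator is $\ge c\!\int_\RNP(c+|x|)^{-2}\nu_{\{t\}}(dx)\ge\kappa(|a|)\,p_t$ for a constant $\kappa(|a|)>0$ (comparing $(c+|x|)^{-2}$ with $(1+|x|)^{-2}$). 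Using also $\int_\RNP x^n(z^*+|x|)^{-2}K_t(dx)\le\Delta G_t^{-1}\!\int_\RNP(z^*+|x|)^{-1}\nu_{\{t\}}(dx)\le(c\,\Delta G_t)^{-1}$, it follows from~\eqref{lambda-hat} that
\[
|\partial_c\hat\lambda^n_t(c)|=|\partial_c z^*|\!\int_\RNP\frac{x^n}{(z^*+|x|)^2}K_t(dx)\le\frac{1}{c^2\,\kappa(|a|)\,p_t\,\Delta G_t}\le\frac{4}{|a|^2\kappa(|a|)}\cdot\frac{1}{p_t\,\Delta G_t}.
\]

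Putting~(a)--(c) together and writing $v^{m,n}_t(\omega,z)-v^{m,n}_t(\omega,\tilde z)=(z^m-\tilde z^m)\hat\lambda^n_t(c)+\tilde z^m(\hat\lambda^n_t(c)-\hat\lambda^n_t(\tilde c))$ with $|c-\tilde c|\le|z-\tilde z|$, the three estimates give~\eqref{C2-2}--\eqref{C2-4} with $C^m(a)$ depending on $|a|,a^m$ and with $\delta^m_t$ replaced by $W_t:=1+\sum_n(h^n_t)^{-1}\I(h^n_t>0)+(p_t\Delta G_t)^{-1}\I(\Delta G_t>0)$, which is locally bounded by hypothesis (note $p_t\le\bnu_t\le1$, so $\Delta G_t^{-1}\le(p_t\Delta G_t)^{-1}$). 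To pass to the predictable \cadlag\ $\delta^m$ required by Theorem~\ref{th-existence}, I would localize: choose stopping times $S_j\uparrow\infty$ with $W\le\rho_j$ on $\{t\le S_j\}$, replace every strategy of the profile by its stopping at $S_j$ (for $\hat\bL$ this makes the rate $v^{m,n}_t(\omega,z)\I(t\le S_j)$; for the other strategies it only changes the sets $\Pi$), and enlarge each $\Pi^{m,n}$ by the predictable set $\{(\omega,t):t>S_j(\omega)\}$. The stopped profile then satisfies~\ref{th-existence-C1}--\ref{th-existence-C2} globally with $\delta^m\equiv1$ (and $C^m$ absorbing $\rho_j$), so Theorem~\ref{th-existence} gives a unique wealth process $Y^{(j)}$; by uniqueness the $Y^{(j)}$ coincide on $\{t\le S_j\}$ and paste into the unique wealth process of the original profile, which is therefore feasible.

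The main obstacle is estimate~(c): showing that $c\mapsto\zeta_t(\omega,c)$ (equivalently $c\mapsto\hat\lambda_t(\omega,c)$) is Lipschitz on compact $c$-intervals with a \emph{locally bounded} constant. The clean bound above hinges on the identity $c\bigl(\int_\RNP(z^*+|x|)^{-1}\nu_{\{t\}}(dx)+(z^*)^{-1}(1-\bnu_t)\bigr)=1$ read off from~\eqref{zeta-def}, and it is precisely to bound the denominator of $\partial_c z^*$ from below that one needs the local boundedness of $(p_t\Delta G_t)^{-1}\I(\Delta G_t>0)$, with the local boundedness of $(h^n_t)^{-1}\I(h^n_t>0)$ playing the analogous role in the lower bound~\eqref{C2-2}. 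The remaining pieces — the $\{\bnu_t=0\}$ integral estimates in~(b)--(c), the measurability points, and the localization bookkeeping (including predictability of the enlarged $\Pi^{m,n}$) — are routine.
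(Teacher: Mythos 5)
Your proposal is correct and follows essentially the same route as the paper's proof: reduce to Theorem~\ref{th-existence} with $v^{m,n}_t(z)=z^m\hat\lambda^n_t(|z|)$, $\Pi^{m,n}=\{h^n=0\}$, the lower bound \eqref{C2-2} via $\hat\lambda_t(c)\ge h_t/(c\vee1)$, and the Lipschitz bound \eqref{C2-4} controlled through $\zeta_t(c)-\zeta_t(\tilde c)$ and the quantity $p_t\Delta G_t$, with the two local-boundedness hypotheses entering exactly where you say they do. The only divergences are in execution: the paper bounds $\zeta_t(c)-\zeta_t(\tilde c)$ by subtracting the two integral relations from \eqref{zeta-def} (using the probability measure $\Q_t$, which handles the case $\tilde c$ falling in $\Gamma_2$ by the same inequality, whereas your implicit differentiation defers the continuity check at the $\Gamma_1/\Gamma_2$ boundary); it takes the running supremum $\delta_t=\sup_{s\le t}(\cdots)\vee1$ rather than your stopping-time localization and pasting; and it justifies dropping the indicator in \eqref{C1-1} via the non-vanishing of $Y^m$ established in the proof of Theorem~\ref{th-survival}, rather than your (also valid) appeal to part 2) of Definition~\ref{def-feasible-profile}.
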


The proof is given in Section~\ref{sec-proofs}.

\section{The main results}
\label{sec-main}
The following three theorems are the main results on relative growth optimal
strategies. For convenience, we divide this section into three parts, each
containing a theorem and comments. The proofs are in
Section~\ref{sec-proofs}.

\paragraph{1.}
The first result establishes the existence of a relative growth optimal
strategy ($\hat\bL$ is such a strategy) and shows that it is, in a certain
sense, unique.

\begin{theorem}
\label{th-survival}
1. The strategy $\hat\bL$ is relative growth optimal.

2. Suppose $\bL$ is a strategy of investor $M$ such that the profile
$(\hat\bL^1,\ldots,\hat\bL^{M-1},\bL)$ and a vector of initial wealth
$y_0\in\RMpp$ are feasible and $r^M$ is a submartingale. Then $\bL_t(Y) =
\hat\bL^M_t(Y)$ for all $t\ge0$, where $Y$ is the solution of the wealth
equation for this strategy profile and initial wealth.
\end{theorem}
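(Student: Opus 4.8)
The plan is to establish part~1 by a direct computation showing that the increments of $\ln r^m$ have non-negative predictable drift, and part~2 by a strict-concavity argument exploiting the submartingale hypothesis.

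For part~1, fix the investor who uses $\hat\bL$ (call him investor $m$) and arbitrary strategies for the others. Write the wealth equation \eqref{capital-eq-integ} in operational time: since all strategy realizations are of the form $L^k = \int_0^\cdot l^k_s\, dG_s$ and $X = b\cint G + (\text{jump part driven by } \nu = K\,dG)$, the process $Y$ is a semimartingale whose characteristics are all absolutely continuous with respect to $G$. I would decompose the increment $d\ln r^m_t = d\ln Y^m_t - d\ln|Y_t|$ into a continuous-in-$G$ part (coming from $X^c$ and from the $-d|L^k|$ terms) and a jump part (coming from the atoms of $\nu$, i.e.\ from $\Delta X_t$). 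Using the explicit form of $\hat\lambda$ in \eqref{lambda-hat} and of $\zeta$ from Lemma~\ref{lem-zeta-def}, the key is to check, pointwise in $(\omega,t)$, that
\[
\E_{K_t}\!\left[\ln\frac{Y^m_{t-} - \hat\bl^m_t|Y_{t-}|\Delta G_t\, |x|/|y| \cdot(\ldots) }{\ldots}\right]\ \ge\ 0,
\]
more precisely that the $G$-density of the predictable compensator of $\ln r^m$ is non-negative. The cleanest route is to recognize $\hat\lambda$ as the first-order condition of a pointwise concave optimization problem: for fixed $(\omega,t)$ and fixed aggregate investment of the rivals, the map sending investor $m$'s investment rate $v$ to the $G$-drift of $\ln r^m$ is concave, and the choice $v = y^m\hat\lambda_t(|y|)$ together with the cash-balance dictated by $\zeta$ is its maximizer; moreover the value at this maximizer dominates the value obtained by the ``mimic the aggregate'' perturbation, which gives exactly $0$. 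This furnishes the submartingale property once one checks the requisite integrability (the drift density is bounded because $\hat\lambda$ is, and $\ln r^m$ is non-positive, so the negative part is integrable; a localization argument upgrades the local-submartingale property to a genuine submartingale, as $\ln r^m\le 0$). One also has to verify $Y^m_t>0$ for all $t$: this follows because the drift of $\ln Y^m$ relative to $\ln|Y|$ is bounded below, so $\ln r^m$ cannot reach $-\infty$ in finite time, and $|Y_t|$ stays finite.

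For part~2, I would run the same pointwise analysis but now with investor $M$ using an arbitrary $\bL$ with realization rate $l^M = v$. The submartingale hypothesis on $r^M$ forces the $G$-drift density of $\ln r^M$ (or rather of $r^M$ itself — one first passes from $\ln$ to $r^M$ via Jensen, or works directly with $r^M$) to be non-negative $\P\otimes G$-a.e. But the pointwise optimization problem identified in part~1 has a \emph{strictly} concave objective (strict concavity of $v\mapsto$ drift, coming from strict concavity of $\ln$ and of $v\mapsto v/(v+\text{const})$ under the integral against $K_t$ and against $\nu_{\{t\}}$), so its maximizer is unique, and the maximum value is $0$ with equality iff $v$ equals the maximizer, which is $Y^M_{t-}\hat\lambda_t(|Y_{t-}|)\Delta G$-plus-the-$\zeta$-cash-rule — i.e.\ iff $v = \hat\bl^M$. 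Hence $l^M_t = Y^M_{t-}\hat\lambda_t(|Y_{t-}|)$ for $\P\otimes G$-a.a.\ $(\omega,t)$, and since $L^M$, $\hat\bL^M(Y)$ are both absolutely continuous with respect to $G$ and \cadlag\ with the same initial value $0$, they coincide identically, which is the assertion.

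The main obstacle I anticipate is the bookkeeping in the pointwise drift computation, in particular handling the atoms of $G$ (where $\Delta G_t>0$, so that $\Delta X_t$ may have a sizeable deterministic-looking component through $b$ and the increments are genuinely discrete) uniformly together with the continuous part, and the case analysis over $\Gamma_0,\Gamma_1,\Gamma_2$ — on $\Gamma_2$ the ``cash'' variable $\zeta$ hits the boundary and the first-order condition is replaced by a corner condition, so the optimality of $\hat\lambda$ there needs a separate (still elementary) argument. The second delicate point is the localization/integrability step needed to turn the pointwise drift inequality into an honest submartingale statement rather than merely a local-submartingale one; here non-positivity of $\ln r^M$ is what saves the day, but it must be invoked carefully together with the observation from \eqref{survival} that $r^M$ is bounded away from $0$ once the submartingale property is known.
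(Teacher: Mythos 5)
Your overall architecture for part 1 (compute the $G$-density of the predictable drift of $\ln r^m$, show it is non-negative pointwise, then use non-positivity of $\ln r^m$ to upgrade from a $\sigma$-/local submartingale to a true submartingale) matches the paper. But the mechanism you propose for the pointwise inequality does not work: $\hat\lambda$ is \emph{not} the maximizer of the map sending investor $m$'s own investment rate $v$ to the $G$-drift of $\ln r^m$ for fixed rival rates. Already on $\{\Delta G=0\}$ with $K_t=0$ the drift equals $(1-r)(|\tilde\lambda|-|v|)+\sum_n \frac{b^n}{W_-}\bigl(\frac{v^n}{rv^n+(1-r)\tilde\lambda^n}-1\bigr)$, whose critical point solves $(rv^n+(1-r)\tilde\lambda^n)^2=b^n\tilde\lambda^n/W_-$ and hence depends on $\tilde\lambda$; it coincides with $\hat\lambda^n=b^n/W_-$ only when $\tilde\lambda=\hat\lambda$. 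This is to be expected: a strategy maximizing the drift against arbitrary rivals would have to depend on the rivals, while $\hat\bL$ does not. So the first-order-condition/corner-condition program you outline cannot close. The paper's actual mechanism is different: with $v=\hat\lambda$ the drift contains the relative-entropy-type term $\lambda(\ln\lambda-\ln\bar\lambda)$, where $\bar\lambda$ is the \emph{aggregate} proportion process, and the generalized Gibbs inequality \eqref{logsum}, $\alpha(\ln\alpha-\ln\beta)\ge\frac14\|\alpha-\beta\|^2+|\alpha|-|\beta|$, bounds the drift below by $\frac14(1-r_-)^2\|\lambda-\tilde\lambda\|^2\ge0$. The concavity being exploited is thus in the \emph{second} slot (the aggregate in the denominator), not in the investor's own control variable.

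For part 2 there are two further gaps. First, you assert that the realization $L^M$ is absolutely continuous with respect to $G$; for an arbitrary feasible $\bL$ this is part of what must be proved, and in the paper it is extracted from the vanishing of the non-decreasing predictable process $\tilde Z$, which carries the singular part $L^{\s,M}$. Second, your uniqueness argument rests on strict concavity of the same best-response problem, which is the flawed step above. The paper instead argues by a martingale sandwich: $r^M$ a submartingale forces $r^1$ to be a supermartingale (investors $1,\dots,M-1$ all use $\hat\bL$), hence $\ln r^1$ is a supermartingale by Jensen and a submartingale by part 1, hence a martingale with zero drift; the quantitative lower bound $\frac14(1-r_-)^2\|\lambda-\tilde\lambda\|^2$ from \eqref{logsum} then forces $\tilde\lambda_t=\hat\lambda_t(W_{t-})$ $\P\otimes G$-a.e.\ together with $L^{\s,M}=0$, yielding $L^M=\hat\bL^M(Y)$. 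Your final step, passing from $l^M_t=Y^M_{t-}\hat\lambda_t(W_{t-})$ $\P\otimes G$-a.e.\ to identity of the cumulative processes, is fine once these two points are repaired.
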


Let us comment on the second part of the theorem. It can be regarded as a
uniqueness result for a relative growth optimal strategy: if $M-1$ investors
use the strategy $\hat\bL$, then the remaining investor, who wants the
relative wealth to be a submartingale, has nothing to do but to act as using
the strategy $\hat\bL$ as well. Here, ``to act'' means that the realization
of the strategy of this investor, i.e.\ the cumulative investment process
$L_t(\omega) = \bL_t(\omega,Y(\omega))$ coincides (up to
$\P$-indistinguishability) with the process $\hat L_t^M(\omega)
=\hat\bL_t^M(\omega,Y(\omega))$. As a consequence, the relative wealth of
each investor will stay constant.

However, note that the strategy $\bL_t(\omega,y)$, as a function on
$\Omega\times D\times \R_+$, may be different from $\hat\bL_t^M(\omega,y)$.
Let us provide an example. Suppose there is only one asset with the
non-random payoff process $X_t = t$ and two investors with initial wealth
$y_0^1 = y_0^2 = 1$. In this case, $G_t = t$ and the strategy $\hat\bL$, if
used by investor 2, has the form
\[
\hat\bL_t(y) = \int_0^t \frac{y^2_{s-}}{y_{s-}^1 + y_{s-}^2} ds.
\]
On the other hand, consider the strategy $\bL$ for investor 2 defined as
\[
\bL_t(y) = \int_0^t \biggl(\frac13\I(y_u^1=1\text{ for all }u< s) +
\frac{y^2_{s-}}{y_{s-}^1 + y_{s-}^2} \I(y_u^1 \neq1
\text{ for some } u< s)\biggr) ds.
\]
It is not hard to see that $\bL$ is also relative growth optimal. However it
leads to a different wealth process of investor 2 compared to $\hat\bL$ if,
for example, $\bL_t^1 \equiv 0$.

\paragraph{2.}
The second result shows that the strategy $\hat\bL$ asymptotically
determines the structure of the market in the sense that if there is an
investor who uses it, then the representative strategy of all the investors
is asymptotically close to $\hat\bL$. (By the representative strategy we
call the weighted sum of the investors' strategies with their relative
wealth as the weights; see below.) Moreover, if the representative strategy
of the other investors is asymptotically different from $\hat\bL$, they will
be driven out of the market -- their relative wealth will vanish as
$t\to\infty$.

In order to state the theorem, let us introduce auxiliary processes. Suppose
a unique solution of the wealth equation exists. Let $L_t^m(\omega) =
\bL_t^m(\omega, Y(\omega))$ be the realizations of the investors'
strategies, and, as above, $l_t^m = d L_t^m/d G_t$. For each $m$, define the
predictable process $ L_t^{\s,m} = L_t^m - l^m\cint G_t, $ which is the
singular part of the Lebesgue decomposition of $L_t^m$ with respect to $G_t$
(hence the superscript ``$\s$'').

Define the proportion $\lambda_t^m$ of wealth invested in the assets by
investor $m$ as the predictable process with values in $\RNP$ and the
components
\[
\lambda_t^{m,n} = \frac{l_t^{m,n}}{Y_{t-}^m},
\]
where $0/0=0$. Note that by condition 2 of Definition~\ref{def-feasible-profile}, we have
$l_t^{m,n} = 0$ on the set $\{(\omega,t) : Y_{t-}^m(\omega) = 0\}$
($\P\otimes G$-a.s.). Introduce also the processes of ``cumulative
proportions'' of invested wealth and their singular parts:
\[
\Lambda_t^{m} = \frac{1}{Y_{-}^m} \cint L_t^{m}, \qquad
\Lambda_t^{\s,m} = \Lambda_t^{m} - \lambda^{m}\cint G_t = \frac{1}{Y_{-}^m} \cint L_t^{\s,m},
\]
which are non-decreasing, predictable, \cadlag, and with values in
$[0,+\infty]^N$.

For a set of investors $\mathbb{M}\subseteq\{1,\ldots,M\}$, let us denote
their total wealth by $Y^\mathbb{M}_t = \sum_{m\in\mathbb{M}} Y_t^m$, their
relative wealth by $r_t^\mathbb{M} = \sum_{m\in \mathbb{M}} r_{t}^m$, and
the processes associated with the realization of their representative
strategy by $L_t^\mathbb{M} = \sum_{m\in\mathbb{M}} L_t^m$, $l_t^\mathbb{M}
= d L_t^{\mathbb{M}}/ d G_t= \sum_{m\in\mathbb{M}} l_t^m$,
$L_t^{\s,\mathbb{M}} = \sum_{m\in\mathbb{M}} L_t^{\s,m}$, and
\[
\begin{aligned}
&\lambda_t^{\mathbb{M}} = \frac{l_t^{\mathbb{M},n}}{Y_{t-}^\mathbb{M}}
= \sum_{m\in\mathbb{M}}
\frac{r_{t-}^m}{r_{t-}^{\mathbb{M}}} \lambda_{t}^{m},\\
&\Lambda_t^{\mathbb{M}} =
\frac{1}{Y_{-}^\mathbb{M}} \cint L_t^{\mathbb{M}}
=  \sum_{m\in\mathbb{M}}
\frac{r_{-}^m}{r_{-}^{\mathbb{M}}} \cint \Lambda_{t}^{m},\\
&\Lambda_t^{\s,\mathbb{M}} =
\frac{1}{Y_{-}^\mathbb{M}} \cint L_t^{\s,\mathbb{M}}
=  \sum_{m\in\mathbb{M}}
\frac{r_{-}^m}{r_{-}^{\mathbb{M}}} \cint \Lambda_{t}^{\s,m}.
\end{aligned}
\]
To shorten the notation, for the set of all investors
$\mathbb{M}_1=\{1,\ldots,M\}$ we will write $\bar\lambda_t^{n} =
\lambda_t^{\mathbb{M}_1,n}$, and for the set $\mathbb{M}_2=\{2,\ldots,M\}$
write $\tilde\lambda_t^{n} = \lambda_t^{\mathbb{M}_2,n}$, and similarly for
the other processes.

\begin{theorem}
\label{th-dominance}
Suppose investor 1 uses the strategy $\hat\bL$, the other investors use
arbitrary strategies $\bL^m$, and the strategy profile
$(\hat\bL^1,\bL^2,\ldots,\bL^M)$ is feasible for some initial wealth
$y_0\in\RMpp$. Then
\begin{equation}
\|\lambda^1 -\bar\lambda\|^2 \cint G_\infty + |\bar\Lambda_\infty^\s| < \infty\
\text{a.s.},\label{th-dominance-1}
\end{equation}
and, as $t\to\infty$,
\begin{equation}
r_t^1 \to 1\ \text{a.s.\ on}\ \{\omega : \|\lambda^1 -\tilde\lambda\|^2 \cint
G_\infty(\omega) = \infty\ \text{or}\ |\tilde\Lambda_\infty^\s(\omega)| =
\infty\}.
\label{th-dominance-2}
\end{equation}
\end{theorem}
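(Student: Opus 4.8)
The plan is to derive everything from the submartingale property of $\ln r^1$, which holds because $\hat\bL$ is relative growth optimal (Theorem~\ref{th-survival}, part 1). The first step is to obtain an explicit Doob--Meyer-type decomposition of $\ln r^1_t$, or rather of a convenient related process. Writing $\rho_t^m = Y_t^m/|Y_{t-}|$ or working directly with the dynamics of $|Y_t|$ and $Y_t^1$, I would apply the Itô formula for semimartingales of finite variation (here everything is driven by $dG$ plus the jumps of $X$) to compute $d\ln r_t^1 = d\ln Y_t^1 - d\ln|Y_t|$. Using the wealth equation \eqref{capital-eq-diff} and the explicit form of $\hat\bL^1$ in \eqref{L-hat}, the continuous ($dG^c$-type) part and the jump part of this differential can be written down; the key is that the optimal strategy is constructed precisely so that these terms assemble into a predictable nondecreasing part plus a local-martingale part. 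In fact the natural quantity that appears is a ``relative entropy / squared-distance'' term comparing $\lambda^1$ with the market average $\bar\lambda$, weighted by the curvature of the logarithm — concretely, a term of order $\|\lambda^1-\bar\lambda\|^2\,dG_s$ from the continuous part and a term controlling $|\bar\Lambda^\s|$ from the singular part of $\bar L$ (the singular part of $\bar L$ contributes negatively to $d\ln|Y_t|$ because money spent ``singularly'' relative to $G$ is money that leaves the total wealth without a compensating payoff flow). This is the step I expect to be the main obstacle: carefully handling the singular part $L^{\s}$, the jumps, and the points where $Y_{t-}^m$ or $|Y_{t-}|$ may vanish, so that the decomposition is valid with the correct signs and all terms are integrable enough to invoke convergence theorems.

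Once the decomposition $\ln r_t^1 = \ln r_0^1 + A_t + N_t$ is established, with $A$ predictable nondecreasing of the form $A_t = \tfrac12 c_1\|\lambda^1-\bar\lambda\|^2\cint G_t + c_2|\bar\Lambda^\s_t| + (\text{jump terms of the same sign}) + R_t$ for some positive constants and a remainder $R$ that is also nonnegative (this nonnegativity is exactly what makes $\hat\bL$ optimal), and $N$ a local martingale, the first assertion \eqref{th-dominance-1} follows quickly. Indeed $\ln r_t^1 \le 0$ always, so $A_t \le -\ln r_0^1 - N_t$; taking a localizing sequence and using that $A$ is nondecreasing together with the supermartingale/Fatou argument (a nonnegative local submartingale bounded above has its compensator integrable), one gets $A_\infty < \infty$ a.s., hence both $\|\lambda^1-\bar\lambda\|^2\cint G_\infty < \infty$ and $|\bar\Lambda^\s_\infty| < \infty$ a.s. I would spell this out via the standard fact that if $\ln r^1$ is a submartingale bounded above by $0$ then $\E[A_\infty] \le -\E[\ln r_0^1] < \infty$, or, if integrability of $\ln r_0^1$ is not assumed a priori (it is, since $y_0$ is nonrandom and strictly positive), via localization and monotone convergence.

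For the second assertion \eqref{th-dominance-2}, the idea is to compare the market average $\bar\lambda$ with the average $\tilde\lambda$ over investors $2,\dots,M$. Since $\bar\lambda = r^1_{-}\lambda^1 + (1-r^1_{-})\tilde\lambda$ and similarly for the cumulative and singular parts, on the event where $\|\lambda^1-\tilde\lambda\|^2\cint G_\infty = \infty$ or $|\tilde\Lambda^\s_\infty| = \infty$ but $r^1$ does \emph{not} tend to $1$, we would have $\liminf_t (1-r^1_{t-}) > 0$ on a positive-probability subset, which forces $\|\lambda^1-\bar\lambda\|^2\cint G_\infty = \infty$ or $|\bar\Lambda^\s_\infty| = \infty$ there (because $\|\lambda^1-\bar\lambda\| = (1-r^1_{-})\|\lambda^1-\tilde\lambda\|$ and likewise for the singular parts), contradicting \eqref{th-dominance-1}. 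Hence $r^1_t \to 1$ on that event. To make this rigorous I would argue on $\{\liminf r^1 < 1\}$: there $\ln r^1$, being a submartingale with a finite limit (survival, \eqref{survival}), converges, so $r^1$ converges to some limit $r^1_\infty$; if $r^1_\infty < 1$ on a set of positive measure, pick $\epsilon>0$ and the event $\{r^1_t \le 1-\epsilon \text{ eventually}\}$, on which $1-r^1_{t-}\ge\epsilon$ for large $t$, and conclude the two ``$\infty$'' conditions for the tilde-processes propagate to the bar-processes, contradicting the already-proved \eqref{th-dominance-1}. The only delicate points here are the measurability/limit-exchange in splitting off the event and making sure the integrals $\|\lambda^1-\tilde\lambda\|^2\cint G$ over the tail dominate correctly after multiplying by the bounded-below factor $(1-r^1_{-})^2$; these are routine once the structural identity relating bar-, tilde-, and $\lambda^1$-processes is in hand.
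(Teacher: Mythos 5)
Your proposal is correct and follows essentially the same route as the paper: it reuses the decomposition $\ln r^1 = \ln r^1_0 + Z + \tilde Z$ from the optimality proof, lower-bounds the compensator by $\tfrac14\|\lambda^1-\bar\lambda\|^2\cint G$ plus the singular term $|\bar\Lambda^\s|$, deduces finiteness from the upper bound $\ln r^1\le 0$, and then obtains \eqref{th-dominance-2} via the identity $\lambda^1-\bar\lambda=(1-r^1_-)(\lambda^1-\tilde\lambda)$ (and its analogue for the singular parts) together with the a.s.\ convergence of $r^1$.
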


Equation \eqref{th-dominance-1} expresses the idea that the investment
proportions $\bar \lambda$ of the representative strategy of all the
investors are close to $\lambda^1 = \hat \lambda$ asymptotically in the
sense that the integral $\int_0^t \|\hat \lambda_s - \bar\lambda_s\| d G_s$
converges as $t\to\infty$ and the singular part $\bar \Lambda^\s_t$ stays
bounded. If $G_\infty = \infty$, this, roughly speaking, means that $\|\hat
\lambda_t - \bar\lambda_t\|$ is small asymptotically.

Equation \eqref{th-dominance-2} shows that the strategy $\hat\bL$ drives
other strategies out of the market if they are asymptotically different from
it. This result can be also regarded as asymptotic uniqueness of a survival
strategy: if investors $m=2,\ldots,M$ want to survive against investor 1 who
uses the strategy $\hat\bL$, they should also use (at least, collectively) a
strategy asymptotically close to $\hat\bL$.

\paragraph{3.}
Theorems~\ref{th-survival},~\ref{th-dominance} lead to the natural question:
since the strategy $\hat\bL$ is s good, what will happen if all the investors
decide to use it? Obviously, in this case their relative wealth will remain
the same. However, it is interesting to look at
the asymptotic behavior of the absolute wealth $W_t:=|Y_t|$. A priori, it is
even not obvious whether it will grow. Our third result partly answers this
question: we prove that $W$ does not decrease ``on average'' and provide a
condition for $W_t\to\infty$ as $t\to\infty$.

\begin{theorem}
\label{th-equilibrium}
Suppose all the investors use the strategy $\hat\bL$, and the initial wealth
$y_0\in\RMpp$ and the strategy profile $(\hat\bL,\ldots,\hat\bL)$ are
feasible.

Then the process $V_t:=1/W_t$ is a supermartingale and there exists the
limit $W_\infty := \lim_{t\to\infty} W_t \in(0,\infty]$ a.s. Moreover, if
$X$ is quasi-continuous (i.e. $\bnu \equiv 0$), then $\{W_\infty=\infty\} =
\{(1\wedge|x|^2 ) * \nu_\infty=\infty\}$ a.s.
\end{theorem}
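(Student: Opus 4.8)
The three assertions are established in turn.

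\emph{(1) Reduction to a scalar equation.} Since every investor uses $\hat\bL$, the realised investment rates are $l^{m,n}_t=Y^m_{t-}\hat\lambda^n_t(W_{t-})$ with $W_t:=|Y_t|$, so $|l^{\cd,n}_t|=W_{t-}\hat\lambda^n_t(W_{t-})$ and $\sum_m l^{m,n}_t/|l^{\cd,n}_t|=\I(\hat\lambda^n_t(W_{t-})>0)$. Summing \eqref{capital-eq-integ} over $m$ then yields the scalar equation
\[
W_t=W_0-\int_0^t W_{s-}\,|\hat\lambda_s(W_{s-})|\,dG_s+\sum_{n=1}^N\int_0^t\I\bigl(\hat\lambda^n_s(W_{s-})>0\bigr)\,dX^n_s .
\]
Each $Y^m$ then solves a linear equation $dY^m_t=Y^m_{t-}\,d\Xi_t$ with a common finite-variation integrator $\Xi$, so the $r^m$ are constant and $W_t,W_{t-}>0$ for all $t$ (strict positivity also follows from Theorem~\ref{th-survival}.1). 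Using \eqref{zeta-def}, \eqref{lambda-hat} and the good-version property \eqref{good-version} I would record the precise local behaviour of $W$: at a time with $\Delta G_t>0$, on $\Gamma_1$ one has $W_t=\zeta_t(W_{t-})$ if $\Delta X_t=0$ and $W_t=\zeta_t(W_{t-})+|\Delta X_t|$ otherwise; on $\Gamma_2$, $W_t=|\Delta X_t|$; where $\Delta G_t=0$, $W$ moves continuously with drift $-W_{s-}\!\int_{\RNP}|x|(W_{s-}+|x|)^{-1}K_s(dx)\,dG_s$ and jumps up by $|\Delta X_t|$.

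\emph{(2) $V:=1/W$ is a supermartingale.} Since $W$ is of finite variation, bounded away from $0$ on compacts, and $W,W_-> 0$, $V$ is also of finite variation and $dV_t=-W_{s-}^{-2}\,dW^{c}_s-\sum_{s}\Delta W_s/(W_{s-}W_s)$. I would compute the predictable drift of $V$ and show it is $\le0$ on each of the three regimes. On $\Gamma_1$, $\E[\Delta V_t\mid\F_{t-}]=(1-\bnu_t)/\zeta_t(W_{t-})+\int_{\RNP}(\zeta_t(W_{t-})+|x|)^{-1}\nu_{\{t\}}(dx)-1/W_{t-}$, which vanishes \emph{precisely} because $\zeta$ solves \eqref{zeta-def}; on $\Gamma_0$ the positive contribution $-W_{s-}^{-2}\,dW^c_s$ exactly cancels the compensator of the totally inaccessible jumps of $X$, again by \eqref{lambda-hat} with $\zeta=W_{s-}$; on $\Gamma_2$, $\E[\Delta V_t\mid\F_{t-}]=\int_{\RNP}|x|^{-1}\nu_{\{t\}}(dx)-1/W_{t-}\le0$ by the very definition of $\Gamma_2$. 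Hence $V$ is a local supermartingale, and being non-negative with $V_0=1/|y_0|<\infty$, it is a supermartingale.

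\emph{(3) Convergence and the quasi-continuous case.} Supermartingale convergence gives $V_t\to V_\infty\in[0,\infty)$ a.s., hence $W_t\to W_\infty:=1/V_\infty\in(0,\infty]$ a.s. Now assume $\bnu\equiv0$; then $G$ is continuous, $\zeta_t(c)=c$, and by (2) the process $V$ is in fact a local martingale. A change of variables for $\ln W$ followed by compensation of the jump part gives $\ln W_t=\ln W_0+M_t+A_t$, with $M$ a purely discontinuous local martingale ($\Delta M_s=\ln(1+|\Delta X_s|/W_{s-})$) and
\[
A_t=\int_0^t\!\!\int_{\RNP}g\!\left(\frac{|x|}{W_{s-}}\right)K_s(dx)\,dG_s,\qquad g(u):=\ln(1+u)-\frac{u}{1+u},
\]
the compensator of $B_t:=\sum_{s\le t}g(|\Delta X_s|/W_{s-})$. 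Then $\{W_\infty=\infty\}=\{(1\wedge|x|^2)*\nu_\infty=\infty\}$ would follow from: (i) $\{W_\infty=\infty\}=\{A_\infty=\infty\}$ a.s., read off from the decomposition of $\ln W$ and a convergence/explosion criterion for $\ln W$ (equivalently, for the stochastic exponential $V$); (ii) $\{A_\infty=\infty\}=\{B_\infty=\infty\}$ a.s., the standard equivalence between divergence of a non-decreasing locally integrable process and of its compensator; (iii) $\{B_\infty=\infty\}=\{(1\wedge|x|^2)*\nu_\infty=\infty\}$ a.s. For (iii) I would use the elementary bounds $\tfrac18(1\wedge u^2)\le g(u)\le \tfrac12u^2\wedge\ln(1+u)$ together with the a priori control of $W$ from (1)--(2): on $\{W_\infty<\infty\}$ one has $0<\underline W\le W_{s-}\le\overline W<\infty$, so $(1\wedge|x|^2)\le C(\underline W,\overline W)\,g(|x|/W_{s-})$ and hence $(1\wedge|x|^2)*\nu_\infty<\infty$ there; conversely $(1\wedge|x|^2)*\nu_\infty<\infty$ forces $\nu(\{|x|>1\})<\infty$ (only finitely many, finite, large jumps of $X$) and $\sum_{|\Delta X_s|\le1}|\Delta X_s|^2<\infty$, which with $W_{s-}\ge\underline W>0$ gives $B_\infty<\infty$. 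Parts (1) and (2) are largely bookkeeping around the identity \eqref{zeta-def}; the delicate point is (iii), where the natural information integrand $g(|x|/W_{s-})$ depends on the unknown trajectory of $W$, so converting it into the clean quantity $(1\wedge|x|^2)*\nu_\infty$ requires separating small and large jumps and using the a priori bounds on $W$, with some care over localisation and over the exact convergence criterion invoked in (i).
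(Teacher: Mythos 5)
Your steps (1) and (2) are correct and agree in substance with the paper, which reaches the same conclusion by writing $W=W_0\e(S)$, $V=V_0\e(U)$ and checking that the drift rate of $U$ with respect to $G$ is nonpositive; your direct computation of the conditional drift of $V$ on $\Gamma_0,\Gamma_1,\Gamma_2$ is the same calculation in different clothing, and the observation that the drift vanishes on $\Gamma_1$ precisely because $\zeta$ solves \eqref{zeta-def} is exactly the paper's mechanism.

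The gap is in part (3), and it is not where you locate it. First, the decomposition $\ln W=\ln W_0+M_t+A_t$ with $A_t=g(|x|/W_{-})*\nu_t$ presupposes that $\ln W$ is a special semimartingale, i.e.\ that $g(|x|/W_{s-})*\mu$ is locally integrable. Since $g(u)\sim\ln u$ for large $u$ and the paper imposes no integrability on $X$ beyond $(|x|\wedge 1)*\nu_t<\infty$, this fails already for a quasi-continuous compound-Poisson payoff whose jump law has no logarithmic moment: then $A_t=+\infty$ for all $t>0$ and $M$ is not defined, while the theorem's conclusion ($W_\infty=\infty$ because $(1\wedge|x|^2)*\nu_\infty=\infty$) still has to be proved. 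Second, even where the decomposition is valid, your step (i) --- the equivalence $\{W_\infty=\infty\}=\{A_\infty=\infty\}$ --- is the entire content of the dichotomy and is only invoked, not established; it requires showing both that $M$ converges a.s.\ on $\{A_\infty<\infty\}$ and that $M_t=o(A_t)$ on $\{A_\infty=\infty\}$, which amounts to matching $A$ against the predictable characteristics of $M$. This is precisely what the paper obtains by instead applying the Karatzas--Kardaras convergence criterion (their Proposition~7.1) to the purely discontinuous local martingale $U=-\frac{|x|}{W_-+|x|}*(\mu-\nu)$, whose jumps lie in $(-1,0]$ so that all the relevant integrands are bounded and the integrability issue above never arises; that yields $\{W_\infty=\infty\}=\{(\frac{|x|}{W_-+|x|})^2*\nu_\infty=\infty\}$ directly. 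Your step (iii), which you flag as the delicate point, is in fact the elementary part (two-sided comparison of the integrand with $1\wedge|x|^2$ using the pathwise bounds $0<\underline W\le W_{s-}$, and $W_{s-}\le\overline W$ on $\{W_\infty<\infty\}$), and the paper needs the same comparison in its final implicit step. To repair your argument you would either have to truncate the large jumps and treat them separately before compensating, or simply pass to $V=1/W$ as the paper does; as written, the route through $\ln W$ breaks down.
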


If $\E |X_t| <\infty$ for all $t$, then also $\E W_t < \infty$ (since $W_t
\le |y_0| + |X_t|$), and the process $W_t$ will be a submartingale by
Jensen's inequality.
This is what we mean by the phrase that the total
wealth does not decrease on average.

It is interesting to note that if one investor uses the strategy $\hat\bL$
and the other investors use arbitrary strategies, then it does not
necessarily hold that the wealth of such an investor will grow. In
particular, it may happen that $W_t\to0$ as $t\to\infty$, which is
remarkable because an investor always has a trivial strategy which
guarantees that the wealth will not vanish -- just keep all the wealth in
cash. An example can be found in \cite{DrokinZhitlukhin20}.

Another fact worth mentioning is that, as will become clear from the proof
of the theorem, the continuous part of the payoff process $X$ does not
affect the process $W$ if all the investors use the strategy $\hat\bL$, i.e.\
$W$ will be the same for any payoff processes $X$ and $X'$ such that $X-X'$
is a continuous process. For example, if $X$ is continuous, then $W_t=W_0$
for all $t\ge 0$ even if $X$ is a strictly increasing process. In
particular, observe that in the second claim of the theorem, the continuous
part of $X$ does not enter the condition for having $W_\infty=\infty$.

\section{Proofs}
\label{sec-proofs}
\subsection{Proof of Theorem~\ref{th-existence}}
Without loss of generality we will assume that the functions $C^m$ and the
processes $\delta^m$ are the same for all the investors, since otherwise one
can take $C(a) = \max_m C^m(a)$ and $\delta_t = \max_m\delta_t^m$. Moreover,
we can assume that $\delta$ is a non-decreasing process or, otherwise, take
$\delta'_t = \sup_{s\le t} \delta_t$ ($\delta_t'$ will be finite-valued
since $\delta_t$ is predictable and \cadlag, and, hence, locally bounded;
see, e.g., VII.32 in \cite{DellacherieMeyer82}).
Proposition~\ref{prop-H-choice} implies that it is enough to prove the
existence and uniqueness of a solution for some particular choice of the
process $H$ such that $G\ll H$. We will do this for $H=G$.

We are going to construct the process $Y$ by induction on stochastic
intervals $[0,\tau_{i,j}]$ with appropriately chosen stopping times
$\tau_{i,j}$ ($i\in\{0,1,\ldots,M\}$, $j\in\mathbb{Z}_+$) such that
$\tau_{i,j}\le \tau_{i',j'}$ if $(i,j)\le(i',j')$ lexicographically (i.e.
$i<i'$, or $i=i'$ and $j\le j'$), and $\sup_{i,j} \tau_{i,j} = \infty$.
Here, ``by induction'' means that we will construct processes $Y^{i,j}$ such
that on the set $\{(\omega,t): t \le \tau_{i,j}(\omega)\}$ they satisfy
equation \eqref{capital-eq-integ} and on this set $Y^{i,j} = Y^{i',j'}$ for
any $(i',j')\ge(i,j)$. From these processes we can form the single process
$Y$ satisfying \eqref{capital-eq-integ} on the whole set $\Omega\times\R_+$.

Before providing an explicit construction, let us briefly explain the role
that $\tau_{i,j}$ will play. The stopping times $\tau_{i,0}$ for $i\ge 1$
will be the moments when the wealth of one or several investors reaches zero
``in a continuous way'' (i.e. for some $m$ we have $Y_t^m > 0$ for
$t<\tau_{i,0}$ but $Y_{\tau_{i,0}-}^m = 0$). The index~$i$ will correspond
to the $i$-th such event. Not necessarily all the investors will eventually
have zero wealth; in that case we will put $\tau_{i,j}(\omega) = \infty$ for
$i$ starting from some $i'$ and all $j$.

Between $\tau_{i,0}$ and $\tau_{i+1,0}$ we will construct a sequence of
stopping times $\tau_{i,j} \to \tau_{i+1,0}$ as $j\to\infty$, such that on
each interval $[\tau_{i,j}, \tau_{i,j+1})$ the wealth of all the investors,
who have non-zero wealth at $\tau_{i,j}$, can be bounded away from zero by
an $\F_{\tau_{i,j}}$-measurable variable. The wealth of some of those
investors may become zero at $\tau_{i,j+1}$, but only ``by a jump''. If
$\tau_{i,0}(\omega)<\infty$, it will also hold that $\tau_{i,j}(\omega) <
\infty$ for all $j$.

The reason why we need to treat differently the moments when the wealth
reaches zero in a continuous way and by a jump is that we do not assume that
the function $C(a)$ is bounded in a neighborhood of zero (this is necessary,
for example, to apply the theorem to the strategy $\hat\bL$ -- see the proof
of Proposition~\ref{pr-Lhat-conditions}).

Now we will proceed to the construction of $\tau_{i,j}$ and $Y^{i,j}$. Let
$\tau_{0,0} = 0$, and for all $t\ge 0$ put $Y_{t}^{0,0}=y_0$, where $y_0\in
\RMpp$ is the given initial wealth. Suppose $\tau_{i,j}$ and $Y^{i,j}$ are
constructed. We will now show how to construct $\tau_{i,j+1}, Y^{i,j+1}$.
For brevity, $i$ will be assumed fixed and omitted in the notation, so we will
simply write $\tau_j$, $Y^j$, while $Y^{j,m}$ will denote the $m$-th
coordinate of $Y^j$.

Let $A(\omega) = \{m : Y_{\tau_{j}}^{j,m}(\omega) > 0\}$ denote the set of
the investors who are still active (i.e.\ have positive wealth) at
$\tau_{j}$; for $\omega$ such that $\tau_j(\omega) = \infty$ we put
$A(\omega)=\emptyset$. Observe that $A$ is an $\F_{\tau_j}$-measurable
random set (since it is finite, the measurability means that $\I(m\in A)$
are $\F_{\tau_{j}}$-measurable functions for each $m$).

On the set $\{\omega : A(\omega) = \emptyset\}$, define $\tau_{j+1} =
\tau_j+1$ (with $\tau_{j+1}(\omega)=\infty$ if $\tau_j(\omega)=\infty)$, and
on the set $\Omega' = \{\omega : A(\omega) \neq
\emptyset\}$ define
\[
\gamma = (\delta_{\tau_j}+1)C(Y_{\tau_j}^j)
\]
and 
\begin{align}
\tau_{j+1} = \inf\biggl\{&t> \tau_j : |X_t-X_{\tau_j}| \ge
\frac{1}{4M\gamma^2}\;\wedge \ \min_{m\in A}Y^{j,m}_{\tau_j}, \label{tau-1}\\
&\text{or}\ G_t - G_{\tau_j} \ge
\frac{1}{2\gamma} \biggl(\frac{1}{2M}\;\wedge \min_{m\in
A}Y^{j,m}_{\tau_j}\biggr), \label{tau-2}\\
&\text{or}\
\delta_t \ge \delta_{\tau_j}+1,\ \text{or}\ t\ge \tau_j+1\biggr\}. \label{tau-3}
\end{align}
Observe that we have the strict inequality $\tau_{j+1} > \tau_j$ on
$\Omega'$ since the processes $X,G,\delta$ are \cadlag. Also, $\tau_{j+1}
\le \tau_j + 1$ by the condition in \eqref{tau-3}.

For each $\omega$ define the complete metric space
$\mathbb{E}(\omega)$ consisting of \cadlag\ functions $f\colon \R_+\to
\R_+^M$ satisfying the conditions
\begin{align}
&f_t =  Y^j_t(\omega)\ \text{for}\ t\le
  \tau_{j}(\omega), \label{E-1}
  \\
&f_t^m \in \biggl[\frac12 Y_{\tau_{j}}^{j,m}(\omega),\; 2 Y_{\tau_{j}}^{j,m}(\omega)\biggr]\
  \text{for}\ t> \tau_{j}(\omega),\ m=1,\ldots,M,\label{E-2}
\end{align}
and the metric 
\[
d(f,\tilde f) = \sup_{t\ge0} |f_t-\tilde f_t|
\]
(note that if $A(\omega)=\emptyset$, then $\mathbb{E}(\omega)$ consists of
one element).

From now on, we will assume that $\omega$ is fixed and omit it in the
notation. Consider the operator $U$ on $\mathbb{E}$, which maps a
function $f\in\mathbb{E}$ to the \cadlag\ function
$g:=U(f)\colon \R_+ \to \R_+^M$
defined by the formula
\begin{multline}
g_t^m = Y_{t\wedge\tau_{j}}^{j,m}  -
\int_0^t  |v^m_s(f_{s-})| \I(\tau_{j}<s<\tau_{j+1},\; m\in A) d G_s \\+
\int_0^t F^m(l_s(f_{s-})) \I(\tau_{j}<s<\tau_{j+1}) d X_s,
\label{operator-U}
\end{multline}
where $F\colon \R^{MN}\to\R^{MN}$ is the function defined in \eqref{F-def},
and
\begin{equation}
l_s^{m,n}(\omega,z) = v^{m,n}_s(\omega,z) \I(m\in A(\omega)).\label{U-derivative}
\end{equation}

Let us show that $U$ is a contraction mapping of $\mathbb{E}$ to itself. If
$A(\omega) = \emptyset$, this is obvious, so consider $\omega$ such that
$A(\omega)\neq\emptyset$. Suppose $f\in \mathbb{E}$, $g = U(f)$. First we
will show that $g\in \mathbb{E}$. It is clear that $g$ satisfies
\eqref{E-1}, and, if $m\notin A$, then $g^m$ satisfies \eqref{E-2}. To show
that the lower bound in \eqref{E-2} is satisfied for $m\in A$, consider the
first integral in \eqref{operator-U}. Since $f\in \mathbb{E}$, by condition
\eqref{C2-3} we have $v_s^{m,n}(f_{s-}) \le
C(Y^j_{\tau_j})\delta_s \le \gamma$, using that $\delta_s <
\delta_{\tau_j}+1$ for $s<\tau_{j+1}$. Hence the integral can be bounded
from above by $\gamma ( G_{\tau_{j+1}-} - G_{\tau_j})$, and this
quantity does not exceed $\frac12 Y_{\tau_j}^{j,m}$ by the choice of
$\tau_{j+1}$ (see \eqref{tau-2}). Therefore, $g_t^m\ge \frac12
Y_{\tau_j}^{j,m}$ for $t\ge \tau_j$.

The upper bound from \eqref{E-2} for $m\in A$ follows from that the second
integral in \eqref{operator-U} is bounded from above by $|X_{\tau_{j+1}-} -
X_{\tau_j}|$ since $F^{m,n}(l) \le 1$ and by the choice of $\tau_{j+1}$
(see \eqref{tau-1}) we have $|X_{\tau_{j+1}-} - X_{\tau_j}| \le
Y_{\tau_{j}}^{j,m}$. Thus, $g$ satisfies
conditions~\eqref{E-1}--\eqref{E-2}, so $g\in \mathbb{E}$.

Now we will show that $U$ is contracting. Consider $f,\tilde f \in
\mathbb{E}$ and $m\in A$. Then
\begin{multline*}
|U(f)^m_t - U(\tilde f)^m_t | \le \int_{(\tau_j,\tau_{j+1})} |v^m_s(f_{s-}) -
v^m_s(\tilde f_{s-})| d G_s \\+ \sum_{n=1}^N
\int_{(\tau_j,\tau_{j+1})} |F^{m,n}(l_s(f_{s-})) - F^{m,n}(l_s(\tilde f_{s-}))| d
X_s^n := \mathcal{I}_1^m+\mathcal{I}_2^m.
\end{multline*}
By conditions \eqref{C2-4} and \eqref{tau-3}, we have $|v^{m,n}_s(f_{s-}) -
v^{m,n}_s(\tilde f_{s-})| \le\gamma d(f,\tilde f)$ for $s\in(\tau_j,\tau_{j+1})$. Hence
\[
\mathcal{I}_1^m \le
\gamma d(f,\tilde f) (G_{\tau_{j+1}-} -
G_{\tau_j}) \le \frac{1}{4M} d(f,\tilde f),
\]
where the last inequality is due to \eqref{tau-2}.

To bound $\mathcal{I}_2^m$, observe that for each $n$ and $(\omega,t) \in
(\tau_j,\tau_{j+1})\setminus \Pi^{m,n}$ we have (by \eqref{C2-2})
\begin{equation}
|l_s^{\cd,n}(f_{s-})|\ge \min_{m\in A} v_s^{m,n}(f_{s-})  \ge \frac1{\gamma}, \label{l-lower-bound}
\end{equation}
and a similar inequality is true for $|l_s^{\cd,n}(\tilde f_{s-})|$. It is
straightforward to check that $F$ satisfies the property
\[
\biggl |\frac{\partial F^{m,n}}{\partial l^{p,q}}(l)\biggr| \le
\frac{1}{|l^{\cd,n}|} \qquad\text{for any }m,n,p,q.
\]
Hence, for any $l,\tilde l\in \R^{MN}_+$ such that $|l^{\cd,n}|\ge \alpha$
and $|\tilde l^{\cd,n}|\ge \alpha$
for all $n$ with some $\alpha>0$, we have $|F^{m,n}(l) - F^{m,n}(\tilde
l)| \le \alpha^{-1} | l-\tilde l|$.
From this and \eqref{l-lower-bound}, we
find that on the set $(\tau_j,\tau_{j+1})\setminus \Pi^{m,n}$
\[
|F^{m,n}(l_s(f_{s-})) - F^{m,n}(l_s(\tilde f_{s-}))| \le  \gamma
| l_s(f_{s-}) - l_s(\tilde f_{s-})| \le \gamma^2 d(f,\tilde f).
\]
On $\Pi^{m,n}$ we have 
\[
F^{m,n}(l_s(f_{s-})) - F^{m,n}(l_s(\tilde f_{s-})) = 0,
\]
and, consequently, obtain the bound
\[
\mathcal{I}_2^m \le \gamma^2 d(f,\tilde f) | X_{\tau_{j+1}-} - X_{\tau_j}|
\le \frac1{4M}d(f,\tilde f).
\]
Now we see that $U$ is a contraction mapping: $d(U(f), U(\tilde f)) \le \frac12
d(f,\tilde f)$.

As a result, $U(\omega)$ has a fixed point $f^*(\omega)$ for any $\omega$.
Observe that the operator $U$ preserves adaptedness, i.e. if $f_t(\omega)$
is a \cadlag\ adapted process with values in $\R_+^M$ and satisfies
conditions \eqref{E-1}--\eqref{E-2}, then $U(\omega, f(\omega))$ is such a
process as well. Hence $f^*$ is a \cadlag\ adapted process since it can be
obtained, for example, as the limit $U^{(n)}(Y_{t\wedge \tau_j}^j)$ as
$n\to\infty$ where $n$ stands for the $n$-times application of $U$.

Now we can define the process $Y^{j+1}$ as follows: for each $m$ put
\begin{align*}
&Y_t^{j+1,m} = f^{*,m}_t \quad \text{for}\ t < \tau_{j+1},\\
&\begin{aligned}
Y^{j+1,m}_{t} &= f^{*,m}_{\tau_{j+1}-} - 
|v_{\tau_{j+1}}^m(f^{*,m}_{\tau_{j+1}-})| \I(m\in A)\Delta G_{\tau_{j+1}} \\&+ \sum_{n=1}^N
F^{m,n}(v_{\tau_{j+1}}(f^{*,m}_{\tau_{j+1}-})) \I(m\in A)\Delta X_{\tau_{j+1}}^n\qquad\text{for}\ t \ge \tau_{j+1}
\end{aligned}
\label{X-extension}
\end{align*}
(note that $Y^{j+1}_t = Y^{j+1}_{\tau_{j+1}}$ for all $t\ge \tau_{j+1}$).
Inserting $Y^{j+1,m}$ in~\eqref{operator-U}, we obtain the equation for
$t\in [\tau_j,\tau_{j+1}]$
\begin{equation}
\label{operator-U-fixed}
Y^{j+1,m}_{t} = Y_{\tau_j}^{j,m}  -
\int_{(\tau_j,t]}   |v^m_s(Y^{j+1}_{s-})|\I(Y_{\tau_j}^{j,m}>0) d G_s 
+ \int_{(\tau_j,t]} F^m(l_s(Y^{j+1}_{s-}))  d X_s.
\end{equation}
The indicator here can be equivalently replaced by $\I(\inf_{u< s}
Y_{u}^{j+1,m}>0)$, so the first integral becomes equal to
$|\bL^m_t(Y^{j+1})| - |\bL^m_{\tau_j}(Y^{j+1})|$ by~\eqref{C1-1}. In the
second integral, on $(\tau_j,\tau_{j+1}]$ we have (as follows from
\eqref{C1-1})
\[
l^{m,n}_t(Y^{j+1}_t) =\dd{\bL_t^{m,n}(Y^{j+1})}{G_t}.
\]
Consequently, \eqref{operator-U-fixed} implies that the process $Y^{j+1}$
satisfies equation \eqref{capital-eq-integ} for
$t\le \tau_{j+1}$.

Proceeding by induction, for fixed $i$ we obtain the non-decreasing sequence
of stopping times $\tau_{i,j}$ and the processes $Y^{i,j}$. Let
$\tau_{i+1,0} = \lim_{j} \tau_{i,j}\in [0,\infty]$. On $[0,\tau_{i+1,0})$
define the process $Y^{i+1,0}$ by joining $Y^{i,j}$, i.e. for $(\omega,t)$
such that $t< \tau_{i+1,0}(\omega)$ put
\[
Y^{i+1,0}_t = Y_t^{i,0}\I(t<\tau_{i,0})+ \sum_{j=1}^\infty Y^{i,j}_t
\I(\tau_{i,j-1}\le t <\tau_{i,j}).
\]
Observe that on the set $\{\tau_{i+1,0}<\infty\}$, the limit
$Y^{i+1,0}_{\tau_{i+1,0}-}$ exists, since for $t<\tau_{i+1,0}$ the process
$Y_t^{i+1,0}$ satisfies equation \eqref{capital-eq-integ}, in which the
integral processes are non-decreasing and bounded by $X^n_{\tau_{i+1,0}}$,
and the term $|L_t^m|$ is non-decreasing and bounded by
$Y_0^m+|X_{\tau_{i+1,0}}|$. For $t\ge \tau_{i+1,0}$ put
\[
Y^{i+1,0}_{t} = Y^{i+1,0}_{\tau_{i+1,0}-} - |l^{m}| + \sum_{n=1}^N F^{m,n}(l) \Delta X^n_{\tau_{i+1,0}}
\]
with
\[
l^{m,n} = v^{m,n}_{\tau_{i+1,0}}(Y^{i+1,0}_{\tau_{i+1,0}-}) \I(\inf_{s < \tau_{i+1,0}}
Y_s^{i+1,0,m}>0) \Delta G_{\tau_{i+1,0}}
\]
(the process $Y^{i+1,0}$ stays constant after $\tau_{i+1,0}$). One can see
that now $Y^{i+i,0}$ satisfies equation \eqref{capital-eq-integ} for $t\le
\tau_{i+1,0}$. Then the proof of the existence of a solution is finished by
induction. The uniqueness follows from the uniqueness of the fixed point of
the operator $U$ on each step of induction.

\subsection{Proof of Proposition~\ref{pr-Lhat-conditions}}
As follows from Theorem~\ref{th-survival} (see also the remark after its proof on
p.~\pageref{remark-not-vanish}), if a solution of the wealth equation exists
and investor $m$ uses the strategy $\hat\bL$, then the wealth of this investor does not
vanish ($Y^m>0$ and $Y^m_- > 0$). Therefore, it will be enough to prove 
Proposition~\ref{pr-Lhat-conditions} for a strategy profile $\mathcal{L}$ in which
every investor uses either a strategy satisfying the conditions of
Theorem~\ref{th-existence}, or the strategy $\hat\bL'$ such that, when used by
investor $m$, its cumulative investment process is
\[
\hat\bL_t'(m;y) = \int_0^t y_{s-}^m \hat \lambda_s(|y_{s-}|) \I(\inf_{u<s}
y_u^m > 0)d G_s
\]
(it differs from the strategy $\hat\bL$ only by the presence of the
indicator).
In order to show that such a profile is feasible, we will verify
conditions \ref{th-existence-C1}, \ref{th-existence-C2} of
Theorem~\ref{th-existence} for $\hat\bL'(m)$.

Let $v^{m,n}_t(\omega,z) = z^m \hat\lambda_t^n(\omega,|z|)$, so that
$\hat\bL'(m)$ can be represented in the form \eqref{C1-1}. Inequality
\eqref{C1-2} is satisfied because if $\Delta G_t(\omega) > 0$ (and therefore
$\bnu_t(\omega) > 0$), then $|\hat\lambda_t(\omega,c)| = \int_\RNP
|x|(\zeta_t(\omega,c) + |x|)^{-1}\nu_{\{t\}}(\omega,dx) \le 1$ as follows
from the definition of $\hat\lambda$. Hence condition~\ref{th-existence-C1}
holds.

In order to verify condition~\ref{th-existence-C2}, consider the sets
\[
\Pi^{m,n} = \{(\omega,t) : h_t^n(\omega) = 0\}
\]
and define the function $C(a)$ by
\[
C(a) = \max\biggl(\frac{2|a|\vee 1}{a^m/2},\; 2|a|\vee 1,\;
\frac{2+32|a|}{|a|^3\wedge 1}\biggr) \ \ \text{if}\ a^m>0, \quad
C(a) = 1\ \text{if}\ a^m=0,
\]
and the process $\delta_t$ by
\[
\delta_t = \sup_{s\le t} \biggl(\max_{n}
\frac{\I(h_s^n>0)}{h_s^n}  \vee \frac{\I(\Delta
G_s > 0)}{p_s \Delta  G_s} \biggr) \vee 1.
\]
The local boundedness assumptions imply that $\delta_t$ is finite-valued.

Equality \eqref{C2-1} clearly holds.
To prove inequalities
\eqref{C2-2}--\eqref{C2-3}, consider
$z,a\in\RNP$ such that $z^k \in [a^k/2,2a^k]$ for all $k$. Suppose $z^m>0$
(and, hence, $a^m>0$). Then \eqref{C2-2} follows from that
outside the set $\Pi^{m,n}$
\[
\begin{split}
v^{m,n}_t(z) = z^m \hat \lambda_t^n(|z|) &\ge \frac{z^m}{|z|\vee 1}h_t^n \ge
\frac{a^m/2}{(2|a|\vee1) \delta_t} \ge \frac{1}{C(a)\delta_t},
\end{split}
\]
where in the first inequality we used the bound $\hat \lambda_t(c) \ge
h_t/(c\vee 1)$ for any $c>0$, which can be obtained from
\eqref{lambda-hat},~\eqref{process-h} using that $\zeta_t(c) \in[0,c]$.

To prove \eqref{C2-3}, on the set $\Gamma_0$ we can
use the estimate
\begin{equation}
\hat\lambda_t^n(c) = \frac{b_t^n}{c} + \int_\RNP \frac{x^n}{c+|x|} K_t(dx) \le
\frac{h_t^n}{c\wedge 1} \le \frac{1}{c\wedge 1}.\label{lambda_hat_existence_3}
\end{equation}
The last inequality here holds since $|h_t| \le |b_t| + \int_\RNP
(1\wedge|x|)K_t(dx)=1$ (see~\eqref{good-version}).
On the set $\Gamma_1\cup\Gamma_2$, we can use the estimate
\begin{equation}
\hat \lambda_t^n(c) = \int_\RNP \frac{x^n}{\zeta_t(c) +
|x|} K_t(dx) \le K_t(\RNP)= \frac{\bnu_t}{\Delta G_t} \le \frac{1}{\Delta G_t} \label{lambda_hat_existence_4}
\end{equation}
(note that if $(\omega,t,c)\in \Gamma_1\cup\Gamma_2$, then $\Delta G_t>0$
and $b_t=0$). Therefore, we obtain
\begin{equation}
\begin{split}
v^{m,n}_t(z) &\le z^m\max\biggl(\frac{1}{|z|\wedge 1},\; \frac{\I(\Delta
G_t>0)}{\Delta G_t}\biggr) \le (2|a|\vee1)\delta_t\le C(a)\delta_t,\label{lambda_hat_existence_1}
\end{split}
\end{equation}
so \eqref{C2-3} holds.

To prove \eqref{C2-4}, suppose $z,\tilde z,a\in\RNP$ and
$z^k,\tilde z^k \in [a^k/2,2a^k]$ for all $k$. If $ z^m=\tilde z^m =
0$, then $v_t^{m,n}(z)=v_t^{m,n}(\tilde z)=0$, so \eqref{C2-4}
holds. If $\tilde z^m=0$, but $z^m>0$, then, using
\eqref{lambda_hat_existence_1}, we obtain
\[
|v_t^{m,n}(z) - v_t^{m,n}(\tilde z) | = v_t^{m,n}(z) \le (2|a|\vee1)\delta_t
\le \frac{(2|a|\vee1)\delta_t}{a^m/2} |z-\tilde z| \le C(a) \delta_t|z-\tilde z|,
\]
where we used the inequality $|z-\tilde z| \ge z^m \ge a^m/2$. In a similar way,
\eqref{C2-4} is satisfied if $z^m=0$, but $\tilde z^m>
0$.

Let us consider the case $z^m>0$, $\tilde z^m>0$. Denote $c=|z|$, $\tilde c
= |\tilde z|$. Then
\begin{equation}
|v_t^{m,n}(z) - v_t^{m,n}(\tilde z)| \le
\hat \lambda_t^n(\tilde c)|z^m-\tilde z^m|
+ z^m|\hat\lambda_t^n(c)-\hat\lambda_t^n(\tilde c)|. \label{lambda_hat_existence_2}
\end{equation}
Using \eqref{lambda_hat_existence_3}--\eqref{lambda_hat_existence_4}, the
first term in the right-hand side can be bounded as follows:
\[
\hat \lambda_t^n(\tilde c)|z^m-\tilde z^m| \le \frac{2\delta_t}{|a|\wedge 1} |z-\tilde z|.
\]
For the second term in the right-hand side of
\eqref{lambda_hat_existence_2} we have
\[
\begin{split}
&z^m|\hat\lambda_t^n(c)-\hat\lambda^n_t(\tilde c)| \le 2a^m \biggl\{ 
\frac{|c-\tilde c|}{c\tilde c} b_t^n \\&\quad+
|c-\tilde c| \biggl(\int_\RNP \frac{x^n}{(c+|x|)(\tilde c + |x|)} K_t(dx) \biggr)
\I( \Delta G_t = 0) \\ &\quad+
|\zeta_t(c)-\zeta_t(\tilde c)| \biggl(\int_\RNP \frac{x^n}{(\zeta_t(c)+|x|)(\zeta_t(\tilde c) + |x|)} K_t(dx) \biggr)
 \I( \Delta G_t > 0)\biggr\} \\ &:= 2a^m\{A_1 + A_2 + A_3\},
\end{split}
\]
where, for brevity,  $A_i$ denote the three terms in the braces.
Using that $|c-\tilde c|\le |z-\tilde z|$ and $|b_t|\le 1$, $|h_t|\le 1$, we obtain 
\[
A_1 \le \frac{4|c-\tilde c|}{|a|^2} b_t^n  \le \frac{4|z-\tilde z|}{|a|^2} ,\\
\]
and
\[
A_2 \le \frac{|c-\tilde c|}{c(\tilde c\wedge 1)} \int_\RNP \frac{x^n}{1 + |x|}
K_t(dx)  \le \frac{4|z-\tilde z|}{|a|(|a|\wedge 1)}  h_t^n \le
\frac{4|z-\tilde z|}{|a|^2\wedge 1}.
\]
Let us bound $A_3$. Assume $c\ge\tilde c$ (hence also
$\zeta_t(c)\ge\zeta_t(\tilde c)$) and $\zeta_t(c)>0$. Then we have 
\begin{equation}
\begin{split}
A_3 &\le (\zeta_t(c)-\zeta_t(\tilde c)) \biggl(\int_\RNP
\frac{1}{\zeta_t(c)+|x|} K_t(dx)\biggr) \I(\Delta G_t>0) \\&\le
\frac{\zeta_t(c)-\zeta_t(\tilde c)}{c \Delta G_t}\I(\Delta G_t>0) \le
(\zeta_t(c) - \zeta_t(\tilde c)) \frac{2\I(\Delta G_t>0)}{|a| \Delta G_t},
\end{split}
\label{lambda_hat_existence_5}
\end{equation}
where in the second inequality we used the bound
\begin{multline*}
\int_\RNP \frac{1}{\zeta_t(c)+|x|} K_t(dx) = \frac{1}{c\Delta G_t} \int_\RNP
\frac{ c}{\zeta_t( c) + |x|} \nu_{\{t\}}(dx) \\= \frac{1}{c\Delta G_t}
\biggl(1 -\frac c{\zeta_t(c)} (1-\bnu_t)\biggr) \le \frac{1}{c\Delta G_t}.
\end{multline*}
Here the second equality follows from \eqref{zeta-def} -- notice that
$(\omega,t,c)\in \Gamma_1$ because we assume $\zeta_t(c)>0$.

Now we need to bound $\zeta_t(c)-\zeta_t(\tilde c)$ in
\eqref{lambda_hat_existence_5}. Let $\Q_t$ be the random measure on $\RNP$
defined by $\Q_t(A) = \nu_{\{t\}}(A) + (1-\bnu_t)\I(0\in A)$. Observe that
$\Q_t(\RNP)=1$. Since $(\omega,t,c)\in\Gamma_1$ and $(\omega,t,\tilde
c)\in\Gamma_1\cup\Gamma_2$, from \eqref{zeta-def} and \eqref{zeta-def-2} we
find that
\[
\int_\RNP \frac{1}{\zeta_t(c) + |x|} \Q_t(dx) = \frac 1c, \qquad
\int_\RNP \frac{1}{\zeta_t(\tilde c) + |x|} \Q_t(dx) \le \frac 1{\tilde c}.
\]
From this, we obtain
\[
\begin{split}
\frac1{\tilde c} - \frac1c &\ge (\zeta_t(c)-\zeta_t(\tilde c)) \int_\RNP
\frac{\Q_{t}(dx)}{(\zeta_t(c) + |x|)(\zeta_t(\tilde c)+|x|)} \\&\ge
(\zeta_t(c)-\zeta_t(\tilde c)) \int_\RNP
\frac{\Q_{t}(dx)}{(c + |x|)^2} \\ &\ge
(\zeta_t(c)-\zeta_t(\tilde c)) \int_\RNP
\frac{\nu_{\{t\}}(dx)}{(c + |x|)^2} \ge
\frac{\zeta_t(c)-\zeta_t(\tilde
c)}{c^2\vee 1} p_t.
\end{split}
\]
Hence, we conclude that
\begin{equation}
\zeta_t(c)-\zeta_t(\tilde c) \le \frac{(c-\tilde c)(c^2\vee 1)}{c\tilde c
p_t} \le  \frac{4|z-\tilde z|}{(|a|^2\wedge 1) p_t} .
\label{lambda_hat_existence_6}
\end{equation}
From \eqref{lambda_hat_existence_5} and \eqref{lambda_hat_existence_6}, we find
\[
A_3 \le \frac{8|z-\tilde z|\I(\Delta G_t>0)}{(|a|^3\wedge 1)p_t \Delta G_t}.
\]
This implies that inequality \eqref{C2-4} is satisfied when
$z^m>0$ and $\tilde z^m>0$:
\[
\begin{split}
|v(z)_t^{m,n} - v(\tilde z)_t^{m,n}| &\le \biggl(\frac{2}{|a|\wedge 1} + 2|a|
\biggl(\frac{4}{|a|^2} + \frac{4}{|a|^2\wedge 1} +
\frac{8}{|a|^3\wedge1}\biggr)\biggr)\delta_t |z-\tilde z| \\&\le \frac{2+32|a|}{|a|^3\wedge
1} \delta_t|z-\tilde z| \le C(a) \delta_t|z-\tilde z|.
\end{split}
\]
Thus, condition~\ref{th-existence-C2} holds, which finishes the proof.

\subsection{Proof of Theorem~\ref{th-survival}}
The key idea of the proof of the first claim of the theorem is to show that
$\ln r_t$ is a $\sigma$-submartingale by showing that its \emph{drift rate}
is non-negative. Since $\ln r_t$ is a non-positive process, it will be then
a usual submartingale \cite[Proposition~3.1]{Kallsen04}. For the reader's
convenience, let us briefly recall the related notions and known results;
details can be found in, e.g., \cite{Kallsen04}.

A scalar semimartingale $Z$ with $Z_0=0$ is called a $\sigma$-submartingale
if there exists a non-decreasing sequence of predictable sets $\Pi_n \in \PP$
such that $Z_t^{\Pi_n} := \int_0^t\I_s(\Pi_n) d Z_s$ is a submartingale for
each $n$ and $\bigcup_n\Pi_n= \Omega\times\R_+$. Suppose the triplet
$(B^h,C,\nu)$ of predictable characteristics of $Z$ with respect to a
truncation function $h(z)$ admits the representation $B^h = b^h\cint G$,
$C=c\cint G$, $\nu = K\otimes G$ with predictable processes $b_t^h,c_t$, a
transition kernel $K_t(dz)$, and a non-decreasing predictable \cadlag\
process $G_t$. Then $Z$ is a $\sigma$-submartingale if and only if
$\P\otimes G$-a.e.\ on $\Omega\times\R_+$
 \[
\int_{|z|>1} |z| K_t(dz) < \infty\quad\text{and}\quad \mathfrak{d}_t := b_t^h +
\int_{\R} (z - h(z)) K_t(dz) \ge 0
\]
(see \cite[Proposition~11.2]{KaratzasKardaras07} and
\cite[Lemma~3.1]{Kallsen04}). The predictable process $\mathfrak{d}$ is
called the drift rate of $Z$ with respect to $G$. One can see that it does
not depend on the choice of the truncation function $h$ (see
\cite[Proposition~II.2.24]{JacodShiryaev02}).

Observe that if
\[
\int_\R |z| K(dz)<\infty,
\]
then $\mathfrak{d}_t = b^0_t + \int_\R z K_t(dz)$, where $b_t^0 = b_t^h -
\int_\R h(z) K_t(dz)$ is a well-defined predictable process, which does not
depend on the choice of $h$. From this we obtain the corollary that will be
used further in the proof: if $Z$ is a non-positive semimartingale, then it
will be a submartingale if $\P\otimes G$-a.s.
\begin{equation}
\int_{z<0} z K_t(dz) >- \infty \qquad\text{and}\qquad \mathfrak{d}_t = b^0_t
+ \int_\R z K_t(dz) \ge 0.\label{supermart-cond}
\end{equation}
In particular, observe that for a non-positive semimartingale it holds that
$\int_{z>0} z K(dz)<\infty$ since $K_t(\{z : z > -Z_{t-}\}) = 0$.
As a consequence, if \eqref{supermart-cond} is satisfied, then the process
$\mathfrak{d}$ is $G$-integrable and the compensator of $Z_t$ is
\begin{equation}
A_t = \mathfrak{d}\cint G_t.\label{compensator}
\end{equation}

Let us also state one auxiliary inequality, which generalizes well-known
Gibbs' inequality, and will play an important role in the proof. Suppose
$\alpha,\beta\in\RNP$ are two vectors such that $|\alpha|,|\beta|\le 1$ and
for each $n$ it holds that if $\beta^n=0$, then also $\alpha^n=0$. Then
\begin{equation}
\alpha(\ln\alpha - \ln \beta) \ge \frac{\|\alpha-\beta\|^2}{4} + |\alpha|-|\beta|,\label{logsum}
\end{equation}
where $\alpha^n(\ln \alpha^n-\ln \beta^n) = 0$ if $\alpha^n=0$. A short
direct proof can be found in \cite[Lemma~2]{DrokinZhitlukhin20}.

Now we can proceed to the proof of the first claim of the theorem. Assume
that the strategy $\hat\bL$ is used by investor $m=1$, and 
the wealth equation has a unique solution $Y_t$. We will use the notation of Section~\ref{sec-main}
and introduce the predictable $\RNP$-valued processes $\lambda_t^1,
\Lambda_t^1$ for investor 1 and $\tilde\lambda_t, \tilde \Lambda_t, \tilde
\Lambda_t^\s$ for the other investors. To keep the notation concise, from now on
the superscript ``1'' for investor 1 will be omitted, so we will
simply write $\lambda_t, \Lambda_t$. It will be also convenient to assume
that the particular version of $\lambda_t$ is selected: $\lambda_t(\omega)
= \hat\lambda_t(\omega,W_{t-}(\omega))$ for all $(\omega,t)$ with the function
$\hat\lambda_t(c)$ defined in \eqref{lambda-hat}.

Let $W_t = Y_t + \tilde Y_t$ denote the total market wealth, and $r_t =
Y_t/W_t$ the relative wealth of investor 1. Define the predictable process
$F$ with values in $\R_+^N$ by
\[
F_t^n =  \frac{\lambda_t^{n}}{r_{t-}\lambda_t^n
+(1- r_{t-})\tilde \lambda_t^{n}},
\]
where $0/0=0$. Then $Y$ and $W$ can be written as stochastic exponents
\begin{align}
&Y =  Y_0\e\biggl(-| \Lambda| +   \frac{F}{W_-} \cint X\biggr), \notag\\
&W = W_0\e\biggl( - r_- \cint|\Lambda| -(1-r_-)\cint |\tilde \Lambda| 
+ \frac1{W_-}\cint |X|\biggr).\label{proof-W}
\end{align}
Recall that the stochastic exponent of a semimartingale $S$ is the process
$\e(S)$ which solves the equation $d\e(S)_t = \e(S)_t d S_t$ with
$\e(S)_0=1$. It is known that $\e(S) > 0$ and $\e(S)_- > 0$ if and only if
$\Delta S \neq -1$, see \cite[\S\,II.8a]{JacodShiryaev02}. From the
definition of $\hat\lambda$, one can check that $\Delta(-|\Lambda| +
({F}/{W_-})\cint X)>-1$ up to an evanescent set, hence $Y>0$ and $Y_->0$.

Let $\zeta_t(\omega)$ denote the predictable process
$\zeta_t(\omega,W_{t-}(\omega))$. As follows from the definition of
$\hat\bL$ and $\zeta$, we have $ \zeta_t = (1-|\Delta\Lambda_t|) W_{t-}$.
Let $\tilde \zeta_t = (1-|\Delta\tilde \Lambda_t|) W_{t-}$. Define the
predictable function $f(\omega,t,x)$ by
\[
f_t(x) = \ln\left(\frac{\zeta_t + F_t
x}{r_{t-}\zeta_t + (1-r_{t-})\tilde\zeta_t + |x|}\right).
\] 
Using the Dol\'eans--Dade formula, which for a process of
bounded variation $S$ takes the form $\e(S)_t = \exp(S_t^c + \sum_{u\le t}
\ln(1+\Delta S_u))$, we obtain
\[
\ln r_t = \ln r_0 + (1-r_-) \cint(|\tilde \Lambda_t^c| - |\Lambda^c_t|) +
\frac{F-1}{W_-}\cint X_t^c  +
\sum_{s\le t} f_s(\Delta X_s).
\]
For the further analysis, it will be convenient to split the process $\ln
r_t$ into several parts. Let $f_t(x) = f_t^1(x)+f_t^2(x)+f_t^3(x)$, where
\begin{align*}
&f^1_t(x) = f_t(x)\I(\Delta G_t = 0, \Delta\tilde\Lambda_t = 0),\\
&f_t^2(x) = f_t(x)\I(\Delta G_t > 0),\\
&f_t^3(x) = f_t(x) \I(\Delta G_t = 0, \Delta\tilde\Lambda_t > 0).
\end{align*}
Then 
\begin{equation}
\begin{split}
\ln r_t &= \ln r_0 +  Z_t +  \tilde Z_t
\end{split}
\label{proof-5}
\end{equation}
with the processes
\begin{align}
&
Z_t = (1-r_-) \cint(|\tilde \Lambda_t^c| - |\tilde\Lambda_t^{\s c}| -
|\Lambda^c_t|) + \frac{F-1}{W_-}\cint X_t^c + \sum_{s\le t }(f^1_s +
f^2_s)(\Delta X_s),
\label{Z}\\
&\tilde Z_t = (1-r_-)\cint |\tilde \Lambda_t^{\s c}| +
\sum_{s\le t} f_s^3(\Delta X_s),\label{Z-tilde}
\end{align}
where $\tilde \Lambda_t^{\s c} = \tilde\Lambda_t^\s - \sum_{u\le t}
\Delta\tilde\Lambda_u^\s$ is the continuous part of the singular part of the Lebesgue
decomposition of $\tilde\Lambda$ with respect to $G$.

Observe that $\I(\Delta X\neq0,\; \Delta G=0,\;
\Delta\tilde\Lambda \neq0) = 0$ since the set $\{\Delta
X\neq 0,\; \Delta G=0\}$ is totally inaccessible and the process $\tilde
\Lambda$ is predictable.
Therefore,
\begin{equation}
\sum_{s\le t} f^3_s(\Delta X_s) = \sum_{s\le t}f_s^3(0) = -\sum_{s\le t}
\ln(1- (1-r_{s-})|\Delta\tilde \Lambda_s^s|).\label{f3}
\end{equation}
From this formula and \eqref{Z-tilde}, it follows that $\tilde Z_t$ is a
non-decreasing predictable \cadlag\ process, so in order to show that $\ln
r_t$ is a $\sigma$-submartingale, it is enough to show that $Z_t$ is a
$\sigma$-submartingale.

We will make use of condition \eqref{supermart-cond}. Since the process $Z$
is of bounded variation, it is not difficult to see (from, e.g., the
canonical representation of a semimartingale) that its continuous part can be
represented as $Z_t^c = b^0\cint G_t$, where $b_0$ is the predictable
process from \eqref{supermart-cond}. From \eqref{Z}, we find
\[
b^0_t = (1-r_{t-}) (|\tilde \lambda_t| - |\lambda_t|) \I(\Delta G_t=0) + \frac{(F_t-1)b_t}{W_{t-}}.
\]
The measure of jumps $\mu^Z$ of $Z$ is such that for a function $g(\omega,t,z)$
with $g(\omega,t,0) = 0$ we have
\[
g*\mu^Z_t = g(f^1+f^2)*\mu_t + \sum_{s\le t} g(f_s^2(0)) \I(\Delta X_s=0),
\]
so its compensator can be represented in the form
$\nu^Z = K^Z d G$ with the kernel $K^Z$ such that
\[
\int_\R g_t(z) K_t^Z(dz) = \int_\RNP g_t(f_t^1(x)+f_t^2(x)) K_t(dx) +  \frac{1-\bnu_t}{\Delta G_t}g_t(f_t^2(0))
\]
(when $\Delta G_t(\omega) =0$, we have $f_t^2(\omega,x)=0$, so we treat the
last term in the right-hand side as zero). Consequently, the drift rate of
$Z$ with respect to $G$ is $ \mathfrak{d}_t = b_t^0 + \int_\R zK_t^Z(dz) =
h^1_t + h^2_t$ with the predictable processes
\begin{align}
&h^1_t =  (1-r_{t-}) (|\tilde \lambda_t| - |\lambda_t|) \I(\Delta G_t = 0) +
\frac{(F_t-1)b_t}{W_{t-}} + \int_{\R^N_+} f_t^1(x) K_t(dx) ,\notag\\
&h^2_t = \int_{\R_+^n} f_t^2(x) K_t(dx) + \frac{1-\bnu_t}{\Delta
G_t} f_t^2(0). \label{h2}
\end{align}

We need to show that $h^1,h^2\ge 0$. 
For $h^1$, using the inequality $x-1\ge \ln x$ for $x>0$, we find that
\begin{equation}
{(F_t-1)b_t} \ge 
{b_t \ln (F_t)}, \label{proof-8}
\end{equation}
where we put $b^n_t\ln(F_t^n) = 0$ if $F_t^n=0$ (notice that if $F_t^n=0$,
then $\lambda_t^n=0$, so also $b^n_t=0$). 

Introduce the set $\mathcal{X}_t(\omega) = \{x\in \RNP : x^n = 0\ \text{if}\
F_t^n(\omega) = 0,\; n=1,\ldots, N\}$. On the set $\{(\omega,t,x) : \Delta
G_t(\omega) =0,\; x\in \mathcal{X}_t(\omega)\}$, using the concavity of the
logarithm, the equality $\Delta \Lambda_t = 0$ if $\Delta G_t=0$, and the
inequality $\tilde \zeta_t\le W_{t-}$ we obtain
\begin{equation}
f_t^1(x) \ge \ln\left(\frac{W_{t-} + F_t x}{W_{t-} + |x|}\right) \ge
\frac{x\ln F_t}{W_{t-} + |x|},\label{proof-9a}
\end{equation}
where we put $x^n\ln(F_t^n) = 0$ if
$F_t^n=x^n=0$.   Denote
\begin{equation}
a_t = \int_\RNP \frac{xW_{t-}}{W_{t-} + |x|} K_t(dx) .\label{proof-9b}
\end{equation}
As follows from \eqref{lambda-hat}, we have $K_t(\omega, \RNP\setminus
\mathcal{X}_t(\omega)) = 0$.
Then from \eqref{proof-9a}--\eqref{proof-9b} we obtain
\[
\int_\RNP f_t^1(x) K_t(dx) = \int_{\mathcal{X}_t}  f_t^1(x) K_t(dx)
\ge \frac{a_t \ln F_t}{W_{t-}} \I(\Delta G_t=0).
\]
Together with \eqref{proof-8}, this implies
\[
h^1_t \ge \biggl((1-r_{t-})(|\tilde\lambda_t| - |\lambda_t|) + 
\frac{(a_t+b_t)\ln F_t}{W_{t-}}\biggr)\I(\Delta G_t=0).
\]
From \eqref{lambda-hat}, it
follows that we  have $\lambda_t =
(a_t+b_t)/{W_{t-}}$ when $\Delta G_t=0$, so on the set $\{\Delta G=0\}$  
\begin{multline*}
h_t^1 \ge (1-r_{t-})(|\tilde\lambda_t| - |\lambda_t|) + \lambda_t\ln F_t  \\=
(1-r_{t-})(|\tilde\lambda_t| - |\lambda_t|) + \lambda_t (\ln
\lambda_t -\ln(r_{t-}
\lambda_t + (1-r_{t-})\tilde \lambda_t)).
\end{multline*}
Applying inequality \eqref{logsum}, we obtain
\begin{equation}
h_t^1 \ge \frac14 (1-r_{t-})^2\|\lambda_t-\tilde\lambda_t\|^2 \I(\Delta G_t=0) \ge
0.\label{h1-positive}
\end{equation}

Let us prove that $h^2\ge 0$. Consider the set $\{\Delta G>0\}$, on which we
have
\begin{multline*}
f_t^2(x)= \ln\left( \frac{\zeta_t + F_tx }{r_{t-}\zeta_t + (1-r_{t-})\tilde
\zeta_t + |x|}\right) 
\\=
\ln\left(\frac{\zeta_t + F_tx }{\zeta_t + |x|}\right) +
\ln\left( \frac{\zeta_t + |x|}{r_{t-}\zeta_t + (1-r_{t-})\tilde
\zeta_t + |x| }\right),
\end{multline*}
and, using  the concavity of the logarithm, we find that for $x\in\mathcal X_t(\omega)$
\begin{equation}
f_t^2(x) \ge \frac{x\ln F_t}{\zeta_t + |x|} + \ln \left(\frac{\zeta_t
+ |x|}{r_{t-}\zeta_t + (1-r_{t-})\tilde \zeta_t + |x|}\right) := A_t(x) +
B_t(x).\label{AplusB}
\end{equation}
For the term $A_t(x)$, applying inequality \eqref{logsum}, we get
\[
\begin{split}
\int_\RNP A_t(x) K_t(dx) &=  \lambda_t \ln(F_t) =  \lambda_t
(\ln \lambda_t - \ln(r_{t-}\lambda_t + (1-r_{t-}) \tilde \lambda_t)) \\
&\ge \frac14 (1-r_{t-})^2 \|\lambda_t-\tilde\lambda_t\|^2 + (1-r_{t-}) (|\lambda_t| - |\tilde \lambda_t|).
\end{split}
\]
 For the term $B_t(x)$, using the
inequality $\ln x \ge 1-x^{-1}$, we obtain
\[
B_t(x) \ge \frac{(1-r_{t-})(\zeta_t-\tilde \zeta_t)}{\zeta_t+|x|}.
\]
From the definition of $\zeta_t$ (see \eqref{zeta-def}), it follows that
\[
\int_\RNP \frac 1{\zeta_t+|x|} K_t(dx)   \ge \frac{1}{W_{t-}\Delta G_t}
-\frac{1-\bnu_t}{\zeta_t \Delta G_t},
\]
so we have
\begin{equation}
\int_\RNP B_t(x) K_t(dx) \ge
(1-r_{t-}) (|\tilde\lambda_t| - |\lambda_t|)
- \frac{(1-r_{t-})(1-\bnu_t)(\zeta_t-\tilde\zeta_t)}{\zeta_t \Delta G_t},
\label{B-integ}
\end{equation}
where for the first term in the right-hand side we used that $\zeta_t -
\tilde \zeta_t =
(|\tilde\lambda_t| - |\lambda_t|) W_{t-} \Delta G_t $.
Thus, using \eqref{AplusB}--\eqref{B-integ} and that $K_t(\omega,
\RNP\setminus\mathcal X_t(\omega))=0$, we find
\begin{equation}
\label{f2-bound}
\int_\RNP f_t^2(x) K_t(dx) \ge \frac14 (1-r_{t-})^2
\|\lambda_t-\tilde\lambda_t\|^2 -
\frac{(1-r_{t-})(1-\bnu_t)(\zeta_t-\tilde\zeta_t)}{\zeta_t \Delta G_t}.
\end{equation}
Also, using again the inequality $\ln x \ge 1-x^{-1}$, we obtain
\begin{equation}
f_t^2(0) = \ln\biggl(\frac{\zeta_t}{r_{t-}\zeta_t +
(1-r_{t-})\tilde\zeta_t}\biggr) \ge \frac{(1-r_{t-})(\zeta_t - \tilde\zeta_t)}{\zeta_t}.\label{f20}
\end{equation}
Hence, from \eqref{h2}, \eqref{f2-bound}, and \eqref{f20}, we find that  
\begin{equation}
\begin{split}
\label{h2-positive}
h^2_t & \ge \frac14 (1-r_{t-})^2
\|\lambda_t-\tilde\lambda_t\|^2 \I(\Delta G_t>
0)  \ge 0.
\end{split}
\end{equation}
Thus, we have proved that $h^1,h^2\ge 0$, so $\ln r_t$ is a
submartingale, which finishes the proof of the first claim of the theorem.

To prove the second claim, suppose investors $m=1,\ldots,M-1$ use the
strategy $\hat\bL$ and investor $M$ use some strategy $\bL$. If $r_t^M$ is a
submartingale, then $\ln r_t^1$ is a supermartingale by Jensen's inequality,
and hence a martingale by the first claim of the theorem. Consequently, we
find from \eqref{proof-5} (with the same notation as above)
\[
\tilde Z_t = 0\ \text{a.s.\ for all $t\ge 0$}, \qquad h^1 + h^2 = 0\ \ 
\text{$\P\otimes G$-a.s.}
\]
The first equality implies that $L^{\s,M} = 0$, so
$L^M \ll G$. The second equality, together with \eqref{h1-positive} and
\eqref{h2-positive}, implies that $\tilde \lambda_t =
\hat\lambda_t(W_{t-})$ $\P\otimes G$-a.s., and therefore $\lambda^M_t =
\hat\lambda_t(W_{t-})$ $\P\otimes G$-a.s. Then from \eqref{L-hat} we obtain
$L^M = \hat\bL^M(Y)$, which finishes the proof.

\begin{remark*}
\label{remark-not-vanish}
As can be seen from the proof, the wealth of an investor who uses the
strategy $\hat\bL$ does not vanish ($Y^m>0$ and $Y^m_->0$) on any solution
of the wealth equation (if it exists). This fact is needed in the proof of
Proposition~\ref{pr-Lhat-conditions}.
\end{remark*}

\subsection{Proof of Theorem~\ref{th-dominance}}
We will use the same notation as in the proof of Theorem~\ref{th-survival}. Since
$\ln r_t$ is a non-positive submartingale, there exists the limit $r_\infty
= \lim_{t\to\infty} r_t$. As we have shown, $\ln r_t= \ln r_0 + Z_t + \tilde
Z_t$, where $Z_t$ is a submartingale with drift rate
\[
\mathfrak{d}_t = h^1_t+h^2_t \ge \frac14 (1-r_{t-})^2\|\lambda_t-\tilde\lambda_t\|^2
= \frac14 \|\lambda_t - \bar \lambda_t\|^2.
\]
Hence the compensator $A_t=\mathfrak{d}\cint G_t$ of $Z_t$ (see
\eqref{compensator}) satisfies the inequality
\[
A_t \ge \frac14\|\lambda - \bar \lambda\|^2\cint G_t.
\]
Since $Z_t$ is bounded from above ($Z_t\le -\ln
r_0$),
$A_t$ converges to a finite limit $A_\infty$,
so $\|\lambda - \bar\lambda\|^2 \cint G_\infty<
\infty $. Moreover, on the set $\{\|\lambda-\tilde\lambda\|^2\cint G_\infty
= \infty\}$ we necessarily have $r_\infty=1$, because otherwise we would
have $A_\infty= \infty$ on this set.

From the inequality
$\ln(1-(1-r_{s-})|\Delta\tilde\Lambda_s^\s|) \le
-(1-r_{s-})|\Delta\tilde\Lambda_s^\s|$ and \eqref{Z-tilde},~\eqref{f3}, we obtain
\[
\tilde Z_t \ge (1-r_-)\cint  |\tilde\Lambda_t^\s| = |\bar\Lambda_t^\s|.
\]
Since $\tilde Z$ converges, we have $|\bar\Lambda_\infty^\s| < \infty$, and on the set
$\{|\tilde \Lambda_{\infty}^\s| = \infty\}$ we have $r_\infty=1$.

\subsection{Proof of Theorem~\ref{th-equilibrium}}
Suppose all the investors use the strategy $\hat\bL$. By virtue of
\eqref{proof-W}, $W_t = W_0 \e(S)_t$ with the process
\[
S_t = -|\hat \lambda(W_-)| \cint G_t + \frac{1}{W_-} \cint |X_t| =
-\frac{|x|}{\zeta + |x|} * \nu_t + \sum_{s\le t}\frac{|\Delta X_s|}{W_{s-}},
\]
where $\zeta$ denotes the predictable process $\zeta_t(W_{t-})$. In
particular, the continuous part $S_t^c$ and the jumps $\Delta S_t$ are given by
\begin{align*}
&S_t^c = -\frac{|x|\I(\bnu = 0)}{W_- + |x|} * \nu_t,\\
&\Delta S_t = -\int_\RNP \frac{|x|}{\zeta_t + |x|}
\nu_{\{t\}}(dx) + \frac{|\Delta X_t|}{W_{t-}} = \frac{\zeta_t +
|\Delta X_t|}{W_{t-}} -1.
\end{align*}
From the formula $\e(S)_t = \exp(S_t^c + \sum_{u\le t}
\ln(1+\Delta S_u))$, we find $V_t = V_0 \e(U)_t$ with
the process
\begin{equation}
U_t = - S_t^c- \sum_{s\le t}\frac{\Delta S_s}{1+\Delta S_s} = -S_t^c +
\sum_{s\le t} \biggl( \frac{W_{s-}}{\zeta_s + |\Delta
X_s|}-1\biggr).\label{U}
\end{equation}
The continuous part of $U_t$ is $U_t^c = - S_t^c = b^U\cint
G_t$ with the predictable process
\[
b_t^U = \int_\RNP  \frac{|x|}{W_{t-} + |x|} K_t(dx) \I(\Delta G_t=0),
\]
and the measure of jumps $\mu^U$ acts on functions $f(\omega,t,u)$ with
$f(\omega,t,0)=0$ as
\[
f*\mu_t^U =  f\biggl(\frac{W_-}{\zeta+|x|} - 1\biggr)*\mu_t + \sum_{s\le t}
f_s\biggl(\frac{W_{s-}}{\zeta_s} - 1\biggr) \I(\Delta X_s = 0, \;\bnu_s > 0),
\]
so its compensator  $\nu^U$  is such that
\[
f*\nu^U_t = f\biggl(\frac{W_-}{\zeta + |x|}-1\biggr) *\nu_t + \sum_{s\le t}
f_s\biggl(\frac{W_{s-}}{\zeta_s} - 1\biggr) (1-\bnu_s)\I(\bnu_s>0) .
\]
In particular, $\nu^U= K^U\otimes G$ with the transition kernel $K^U$ such that
\[
\int_\R f_t(u) K^U_t(du) = \int_\RNP f_t\biggl(\frac{W_{t-}}{\zeta_t +
|x|}-1\biggr) K_t(dx) + f\biggl(\frac{W_{t-}}{\zeta_t} - 1\biggr)
\frac{(1-\bnu_t)}{\Delta G_t} \I(\Delta G_t > 0).
\]
From the definition of $\zeta$ in Lemma~\ref{lem-zeta-def}, it follows that $\int_\R |u| K^U_t(du) <
\infty$,
and hence the drift rate of $U$ with respect to $G_t$ is given by
\[
\mathfrak{d}_t^U = b_t^U + \int_\R u K_t^U(du) \le 0,
\]
where the inequality follows from that on the set $\{\Delta G = 0\}$ we have $\zeta_t =
W_{t-}$, and on the set $\{\Delta G > 0\}$ we have
\[
\int_\RNP \biggl(\frac{W_{t-}}{\zeta_t + |x|}-1\biggr) K_t(dx)  \le \biggl(1-\frac{W_{t-}}{\zeta_t} \biggr)
\frac{(1-\bnu_t)}{\Delta G_t}
\]
in view of that $K_t(dx) = (\Delta G_t)^{-1} \nu_{\{t\}}(dx)$ and the
definition of $\zeta$.

Consequently, $U_t$ is a $\sigma$-supermartingale. This implies that $V_t$ is also a
$\sigma$-super\-martin\-gale, and, hence, a usual supermartingale because it is
non-negative. In particular, it has an a.s.-limit $V_\infty =
\lim_{t\to\infty} V_t\in[0,\infty)$, and therefore $W_\infty = 1/V_\infty
\in (0,\infty]$, which proves the first claim of the theorem.

If $\bnu \equiv 0$, we have $\zeta_t = W_{t-}$ for all $t$, so equation
\eqref{U} becomes
\[
U_t = -\frac{|x|}{W_- + |x|} * (\mu_t -\nu_t),
\]
and, hence, $U_t$ is a purely discontinuous local martingale
with bounded jumps, $\Delta U_t \in (-1,0]$. Consequently, according to
Proposition 7.1 in \cite{KaratzasKardaras07}, we have $\{V_\infty = 0\} =
\{|u|^2 *\nu^U_\infty = \infty\}$ a.s., or equivalently $ \{W_\infty =
\infty\} = \{ (\frac{|x|}{W_- + |x|})^2 *\nu_\infty = \infty\}$ a.s.
From this and the existence of the limit $W_\infty$ follows the second claim
of the theorem.

\section{Appendix: Lebesgue derivatives}
\label{appendix}
In this appendix we assemble several known facts about the Lebesgue
decomposition and Lebesgue derivatives of $\sigma$-finite measures, and prove
auxiliary results for random measures generated by predictable
non-decreasing \cadlag\ processes.

\medskip
\noindent
\textbf{The Lebesgue decomposition of $\sigma$-finite measures.} Let
$(\Omega,\F)$ be a measurable space. First recall the following known
result, which can be found (in a slightly different form), e.g., in Chapter~3.2
of~\cite{Bogachev07}.

\begin{proposition}
Let $\P,\tilde\P$ be $\sigma$-finite measures on $(\Omega,\F)$. Then there
exists a measurable function $Z \ge 0$ ($\P$-a.s.\ and $\tilde\P$-a.s.) and
a set $\Gamma\in\F$ such that
\begin{equation}
\tilde \P(A) = \int_A Z d\P + \tilde \P(A\cap \Gamma) \quad \text{for any
$A\in \F$},\label{leb-decomp}
\end{equation}
and
\begin{equation}
\P(\Gamma) = 0.\label{leb-decomp-Gamma}
\end{equation}
Such $Z$ is $\P$-a.s.\ unique and $\Gamma$ is $\tilde\P$-a.s.\ unique, i.e.\
if $Z'$ and $\Gamma'$ also satisfy the above properties, then $Z=Z'$
$\P$-a.s., and $\tilde\P(\Gamma\triangle\Gamma') = 0$ (where
$\Gamma\triangle\Gamma' = \Gamma\setminus\Gamma'\cup\Gamma'\setminus\Gamma$
denotes the symmetric difference).
\end{proposition}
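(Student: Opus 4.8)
The plan is to obtain the decomposition from the Radon--Nikodym theorem applied to $\P$ and $\tilde\P$ against their sum. Set $\mu = \P + \tilde\P$. Since $\P$ and $\tilde\P$ are $\sigma$-finite, so is $\mu$, and clearly $\P\ll\mu$ and $\tilde\P\ll\mu$ with $\P,\tilde\P$ still $\sigma$-finite. Hence the Radon--Nikodym theorem provides measurable densities $f = d\P/d\mu\ge 0$ and $g = d\tilde\P/d\mu\ge 0$, and by $\sigma$-finiteness of $\P$ and $\tilde\P$ these are $\mu$-a.e.\ finite.

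Next I would put $\Gamma = \{f = 0\}$ and define $Z = g/f$ on $\{f>0\}$ and $Z = 0$ on $\Gamma$; this $Z$ is measurable, non-negative, and $\mu$-a.e.\ finite, hence non-negative and finite both $\P$-a.s.\ and $\tilde\P$-a.s. The identity $\P(\Gamma) = \int_\Gamma f\,d\mu = 0$ is immediate. For the decomposition, fix $A\in\F$ and split $A = (A\cap\Gamma)\cup(A\setminus\Gamma)$. On $A\setminus\Gamma$ one has $g = Zf$ $\mu$-a.e., so $\int_{A\setminus\Gamma} g\,d\mu = \int_{A\setminus\Gamma} Zf\,d\mu = \int_{A\setminus\Gamma} Z\,d\P$, and since $\P(\Gamma)=0$ this equals $\int_A Z\,d\P$. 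Combined with $\int_{A\cap\Gamma} g\,d\mu = \tilde\P(A\cap\Gamma)$, this gives $\tilde\P(A) = \int_A g\,d\mu = \int_A Z\,d\P + \tilde\P(A\cap\Gamma)$, which is \eqref{leb-decomp}, and \eqref{leb-decomp-Gamma} was already established.

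For uniqueness, suppose $(Z',\Gamma')$ is another pair with the stated properties. To see $\tilde\P(\Gamma\triangle\Gamma')=0$, take an arbitrary measurable $A\subseteq\Gamma\setminus\Gamma'$; applying the decomposition associated with $\Gamma'$ and using $A\cap\Gamma'=\emptyset$ yields $\tilde\P(A) = \int_A Z'\,d\P$, which vanishes because $A\subseteq\Gamma$ and $\P(\Gamma)=0$. Hence $\tilde\P(\Gamma\setminus\Gamma')=0$, and by symmetry $\tilde\P(\Gamma'\setminus\Gamma)=0$. For $Z=Z'$ $\P$-a.s., note first that on $\Gamma\cup\Gamma'$ we have $\P=0$, so equality holds there trivially; and for any measurable $A\subseteq(\Gamma\cup\Gamma')^c$ both terms $\tilde\P(A\cap\Gamma)$ and $\tilde\P(A\cap\Gamma')$ are zero, so subtracting the two decompositions gives $\int_A Z\,d\P = \int_A Z'\,d\P$ for all such $A$; since $\P$ is $\sigma$-finite (and $Z,Z'$ are $\P$-a.e.\ finite, as follows from the decomposition applied over finite-measure pieces), this forces $Z=Z'$ $\P$-a.s.\ on $(\Gamma\cup\Gamma')^c$, hence $\P$-a.s.

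The only point requiring a little care is the $\sigma$-finiteness bookkeeping: checking that Radon--Nikodym applies against $\mu$, that $f,g$ and therefore $Z$ are a.e.\ finite, and that equality of $\int_A Z\,d\P$ and $\int_A Z'\,d\P$ over all $A$ really yields $Z=Z'$ $\P$-a.s. All of these are direct consequences of $\sigma$-finiteness of $\P$ and $\tilde\P$; the remaining manipulations with densities are routine.
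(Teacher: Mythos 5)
Your proposal is correct and follows essentially the same route as the paper, which constructs $Z$ and $\Gamma$ from the Radon--Nikodym derivatives of $\P$ and $\tilde\P$ with respect to the dominating measure $\Q=\P+\tilde\P$ (taking $\Gamma=\{d\P/d\Q=0\}$ and $Z$ as the ratio of densities off that set), citing Bogachev for the full details. Your write-up simply supplies the verification of the decomposition and the uniqueness argument that the paper leaves to the reference, and both are handled correctly, including the $\sigma$-finiteness points needed to make $Z,Z'$ $\P$-a.e.\ finite.
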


The function $Z$ -- the Lebesgue derivative of $\tilde\P$ with respect to
$\P$ -- is denoted in this paper by ${d\tilde\P}/{d\P}$. If $\tilde\P\ll\P$,
the Lebesgue derivative coincides with the Radon--Nikodym derivative and one
can take $\Gamma = \emptyset$. When it is necessary to emphasize that the
set $\Gamma$ is related to $\tilde\P$ and $\P$, we use the notation
$\Gamma_{\tilde \P/ \P}$.

In an explicit form, $Z$ and $\Gamma$ can be constructed as follows. Let
$\Q$ be any $\sigma$-finite measure on $(\Omega,\F)$ such that $\P\ll\Q$,
$\tilde\P\ll\Q$ (for example, $\Q=\P+\tilde\P)$. Then
\[
Z = \dd{\tilde\P}{\Q}\left(\frac{d\P}{d\Q}\right)^{-1}
\I\left(\frac{d\P}{d\Q}>0\right), \qquad \Gamma =
\left\{\omega:\frac{d\P}{d\Q}(\omega) =
0\right\},
\]
where the derivatives are in the Radon--Nikodym sense.

By approximating a measurable function with simple functions, from
\eqref{leb-decomp}, it follows that for any $\F$-measurable function $f\ge
0$
\begin{equation}
\label{A-lebesgue-integration}
\int_\Omega f d\tilde\P = \int_\Omega f\frac{d\tilde\P}{d\P}d\P +
\int_\Omega f\I(\Gamma) d\tilde\P
\end{equation}
(where the integrals may assume the value $+\infty$).

The following proposition contains facts about Lebesgue derivatives that are
used in the paper.

\begin{proposition}
\label{A-properties}
Let $\P,\tilde\P,\Q$ be $\sigma$-finite measures on $(\Omega,\F)$. Then the
following statements are true.

\begin{enumerate}[label=(\alph*),itemsep=0mm,topsep=0mm]
\item Suppose $\Q$ is representable in the form $\Q(A) = \int_A f d\P +
\int_A \tilde f d\tilde\P$, where $f,\tilde f\ge 0$ are measurable
functions, and $\tilde f = 0$ $\P$-a.s. Then
\[
\dd \P{\Q} = \frac1f \I(f>0,\tilde f=0), \quad \Gamma_{\P/\Q} = \{f=0, \tilde
f=0\}.
\]

\item If $\mathrm{R}$ is a $\sigma$-finite measure such that $\emph{R}\ll
\P$ and $\mathrm{R}\ll \Q$, then
\begin{equation}
\dd{\tilde\P}{\P} = \dd{\tilde\P}{\Q} \dd{\Q}{\P}
\quad\text{$\mathrm{R}$-a.s.}\label{A-recalc-1}
\end{equation}

\item If $\mathrm{R}$ is as in (b), then $d \Q/d \P >0$ and $d \P/d\Q >0$
$\mathrm{R}$-a.s.
\end{enumerate}
\end{proposition}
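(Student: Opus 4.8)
The plan is to treat part~(a) directly from the definition of the Lebesgue decomposition and its uniqueness, and to deduce parts~(b) and~(c) from the explicit representation of Lebesgue derivatives through a common dominating $\sigma$-finite measure, as recalled just before the proposition.

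For~(a), put $Z_0 = \frac1f\I(f>0,\tilde f=0)$ and $\Gamma_0 = \{f=0,\tilde f=0\}$; by the uniqueness part of the previous proposition it suffices to verify that $Z_0,\Gamma_0$ realize the Lebesgue decomposition of $\P$ with respect to $\Q$, i.e.\ satisfy the analogues of \eqref{leb-decomp}--\eqref{leb-decomp-Gamma}. Since $f=\tilde f=0$ on $\Gamma_0$, the defining formula of $\Q$ gives $\Q(\Gamma_0)=0$. Next, approximating by simple functions, $\int_A g\,d\Q=\int_A gf\,d\P+\int_A g\tilde f\,d\tilde\P$ for every measurable $g\ge0$; taking $g=Z_0$ and using $Z_0\tilde f\equiv0$ together with $\tilde f=0$ $\P$-a.s., one obtains $\int_A Z_0\,d\Q=\P(A\cap\{f>0\})$. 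Finally $\P(A)=\P(A\cap\{f>0\})+\P(A\cap\{f=0\})$, and $\P(A\cap\{f=0\})=\P(A\cap\Gamma_0)$ because $\{f=0\}$ and $\Gamma_0$ differ by a $\P$-null set; hence $\P(A)=\int_A Z_0\,d\Q+\P(A\cap\Gamma_0)$, as needed.

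For~(b) and~(c), fix a $\sigma$-finite measure $S$ with $\P,\tilde\P,\Q\ll S$ (say $S=\P+\tilde\P+\Q$) and write $p=d\P/dS$, $q=d\Q/dS$, $\tilde p=d\tilde\P/dS$ for the Radon--Nikodym densities. The explicit formula recalled in the text gives, $S$-a.s.,
\[
\dd\P\Q=\frac pq\I(q>0),\quad \dd\Q\P=\frac qp\I(p>0),\quad \dd{\tilde\P}\P=\frac{\tilde p}p\I(p>0),\quad \dd{\tilde\P}\Q=\frac{\tilde p}q\I(q>0).
\]
Multiplying the last two of these, $\dd{\tilde\P}\Q\,\dd\Q\P=\frac{\tilde p}p\I(p>0,q>0)$ $S$-a.s., which coincides with $\dd{\tilde\P}\P$ except on $\{p>0,q=0\}$; since $\Q(\{q=0\})=\int_{\{q=0\}}q\,dS=0$ and $R\ll\Q$, that set is $R$-null, which proves~(b) (using $R\ll S$ so that the $S$-a.s.\ identities hold $R$-a.s.). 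For~(c), $\dd\Q\P$ vanishes only on $\{p=0\}\cup\{q=0\}$ and $\dd\P\Q$ only on $\{q=0\}\cup\{p=0\}$; both sets are $R$-null because $R\ll\P$ with $\P(\{p=0\})=0$ and $R\ll\Q$ with $\Q(\{q=0\})=0$, so both derivatives are $R$-a.s.\ strictly positive.

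The argument is elementary throughout; the only point needing care is the bookkeeping of the supports of the indicators $\I(p>0),\I(q>0)$ and the check that the sets on which the asserted identities could fail are precisely those annihilated by the hypotheses $R\ll\P$ and $R\ll\Q$. I do not expect any genuine obstacle.
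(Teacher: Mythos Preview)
Your argument is correct. For part~(a) you do exactly what the paper does: verify directly that the proposed $Z_0$ and $\Gamma_0$ satisfy the defining properties \eqref{leb-decomp}--\eqref{leb-decomp-Gamma}. For parts~(b) and~(c), however, you take a different route. The paper proves~(b) intrinsically: it expands $\tilde\P(A)$ by applying the Lebesgue decomposition twice (first with respect to $\Q$, then with respect to $\P$) and compares the result with the decomposition of $\tilde\P$ with respect to~$\P$, obtaining a contradiction on the set where the two candidate densities differ; part~(c) is then an immediate corollary of~(b) with $\tilde\P=\P$. You instead pass to a common dominating measure $S$, express all Lebesgue derivatives through the Radon--Nikodym densities $p,q,\tilde p$ via the explicit formula recalled in the text, and check the required identities pointwise up to $R$-null sets. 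Your approach is more computational and makes the bookkeeping of null sets transparent; the paper's is more self-contained in that it works directly with the Lebesgue decomposition without introducing an auxiliary dominating measure. One small point worth making explicit in your write-up: the identity you establish is for the particular versions of the Lebesgue derivatives given by the formula, and passing to arbitrary versions uses that each derivative is unique up to a $\P$- or $\Q$-null set, hence up to an $R$-null set under the hypotheses $R\ll\P$, $R\ll\Q$.
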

\begin{proof}
(a) is obtained by straightforward verification of
\eqref{leb-decomp}--\eqref{leb-decomp-Gamma}.

(b) Observe that for any $A\in\F$ we have
\begin{equation}
\label{A-recalc-1-proof-1}
\begin{split}
\tilde \P(A)  &= \int_A \dd{\tilde\P}{\Q} d\Q +
  \tilde\P(A\cap \Gamma_{\tilde\P/\Q})\\
&= \int_A\dd{\tilde\P}{\Q} \dd{\Q}{\P}d\P +
  \int_\Omega \I(A\cap \Gamma_{\Q/\P}) \dd{\tilde\P}{\Q} d\Q +  
  \tilde\P(A\cap \Gamma_{\tilde\P/\Q})\\
&= \int_A\dd{\tilde\P}{\Q} \dd{\Q}{\P}d\P +
  \tilde\P(A\cap(\Gamma_{\Q/\P}\cup \Gamma_{\tilde\P/\Q})),
\end{split}
\end{equation}  
where to obtain the second equality we applied
\eqref{A-lebesgue-integration}, and to obtain the third one we expressed the
second integral in the second line from the equality
\[
\tilde\P(A \cap \Gamma_{\Q/\P}) = \int_\Omega
\I(A\cap\Gamma_{\Q/\P})\dd{\tilde\P}{\Q}d\Q
+\tilde\P(A\cap\Gamma_{\Q/\P}\cap\Gamma_{\tilde\P/\Q}).
\]
Suppose for $A=\{\dd{\tilde\P}{\P} > \dd{\tilde\P}{\Q}\dd{\Q}{\P}\}$ we have
$\mathrm{R}(A)>0$. Then also $\mathrm{R}(A')>0$ for $A'=A\cap
(\Gamma_{\Q/\P}\cup \Gamma_{\tilde\P/\Q}\cup\Gamma_{\tilde\P/\P})^c$ because
$\mathrm{R}(\Gamma_{\Q/\P}) = \mathrm{R}(\Gamma_{\tilde\P/\Q}) =
\mathrm{R}(\Gamma_{\tilde\P/\P})=0$. Consequently, $\P(A')>0$. But this
leads to a contradiction between decomposition \eqref{leb-decomp} and
equality \eqref{A-recalc-1-proof-1} for $\tilde \P(A')$, since according to
them we would have
\[
\int_{A'} \dd{\tilde\P}{\P} d\P= \int_{A'}
\dd{\tilde\P}{\Q}\dd{\Q}{\P} d\P,
\]
which is impossible due to the choice of $A$. Hence $\mathrm{R}(\dd{\tilde\P}{\P} >
\dd{\tilde\P}{\Q}\dd{\Q}{\P}) = 0$. In the same way we show that $\mathrm{R}(\dd{\tilde\P}{\P} <
\dd{\tilde\P}{\Q}\dd{\Q}{\P}) = 0$.

(c) follows from \eqref{A-recalc-1} if one takes $\tilde\P = \P$.
\end{proof}

\medskip
\noindent
\textbf{The Lebesgue decomposition of non-decreasing predictable processes.} Let $(\Omega,\F,(\F_t)_{t\ge0}, \P)$ be a filtered probability space
satisfying the usual assumptions, and $\PP$ be the predictable $\sigma$-algebra
on $\Omega\times\R_+$. For a scalar  non-decreasing
\cadlag\ predictable process $G$, denote by $\P\otimes G$ the
measure on $\PP$ defined as
\begin{equation}
\P\otimes G(A) = \E(I(A)\cint G_\infty),\qquad A\in\PP.
\label{A-product-measure}
\end{equation}
Observe that $\P\otimes G$ is $\sigma$-finite on $\PP$. Indeed, this can be
shown by considering the predictable stopping times $\tau_n = \inf\{t\ge 0 :
G_t \ge n\}$.
The stochastic intervals $A_n = [0,\tau_n) := \{(\omega,t) :
t<\tau_n(\omega)\}$ are predictable, i.e.\ $A_n\in\PP$, while $\P\otimes
G(A_n)\le n$ and $\bigcup_n A_n = \Omega\times \R_+$.

\begin{proposition}
\label{A-pred-leb}
(a) For any scalar non-decreasing \cadlag\ predictable processes $G,\tilde
G$ there exists a predictable process $\xi\ge 0$ and a set $\Gamma\in\PP$
such that up to $\P$-indis\-tin\-gui\-sha\-bility
\begin{equation}
\tilde G = \tilde G_0 + \xi\cint G + \I(\Gamma)\cint \tilde G \quad\text{and}\quad \I(\Gamma)\cint G =
0.\label{A-predictable-lebesgue}
\end{equation}
(b) A predictable process $\xi\ge0$ and a set $\Gamma\in\PP$ satisfy
\eqref{A-predictable-lebesgue} if and only if $\xi$ is a version of the
Lebesgue derivative $d(\P\otimes \tilde G)/d(\P\otimes G)$ and $\Gamma$ is
the corresponding set from the Lebesgue decomposition.
\end{proposition}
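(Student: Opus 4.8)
The plan is to carry the abstract Lebesgue decomposition of $\sigma$-finite measures---the first proposition of this appendix---from the measures $\mu=\P\otimes\tilde G$ and $\nu=\P\otimes G$ on $(\Omega\times\R_+,\PP)$ over to the processes themselves, using the classical fact that a predictable non-decreasing \cadlag\ process vanishing at $0$ is determined, up to $\P$-indistinguishability, by the ($\sigma$-finite) measure it generates on $\PP$. The role of that fact is to identify two a priori different processes once they are shown to generate the same measure.

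For~(a): $\mu$ and $\nu$ are $\sigma$-finite on $\PP$, as observed before the proposition, so the abstract decomposition yields a $\PP$-measurable $\xi\ge0$, namely $\xi=d\mu/d\nu$, and a set $\Gamma\in\PP$ with $\nu(\Gamma)=0$ and $\mu(B)=\int_B\xi\,d\nu+\mu(B\cap\Gamma)$ for all $B\in\PP$. Since $\nu(\Gamma)=0$ means $\E[\I(\Gamma)\cint G_\infty]=0$ and $\I(\Gamma)\cint G$ is non-negative and non-decreasing, we get $\I(\Gamma)\cint G=0$, which is the second identity in~\eqref{A-predictable-lebesgue}. For the first identity, put $A=\xi\cint G+\I(\Gamma)\cint\tilde G$ (finiteness of $\xi\cint G$ at finite times follows by localizing along the predictable times $\tau_n=\inf\{t:\tilde G_t\ge n\}$, using $\int_{\,\cdot}\xi\,d\nu\le\mu$). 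A direct computation of its generated measure, via $\E[\I(B)\xi\cint G_\infty]=\int_B\xi\,d\nu$ and $\E[\I(B)\I(\Gamma)\cint\tilde G_\infty]=\mu(B\cap\Gamma)$, gives $\P\otimes A(B)=\int_B\xi\,d\nu+\mu(B\cap\Gamma)=\mu(B)$ for every $B\in\PP$; and $\tilde G-\tilde G_0$ generates the same measure $\mu$, since it differs from $\tilde G$ by an $\F_0$-measurable constant. As $A$ and $\tilde G-\tilde G_0$ are both predictable, non-decreasing, \cadlag\ and vanish at $0$, the quoted uniqueness makes them indistinguishable, which is the first identity in~\eqref{A-predictable-lebesgue}. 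This proves~(a), and since the $\xi$ and $\Gamma$ so obtained are precisely the Lebesgue derivative $d(\P\otimes\tilde G)/d(\P\otimes G)$ and the corresponding set, it also proves the ``if'' direction of~(b).

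For the ``only if'' direction of~(b), let a predictable $\xi'\ge0$ and a set $\Gamma'\in\PP$ satisfy~\eqref{A-predictable-lebesgue}. Taking expectations in $\I(\Gamma')\cint G=0$ gives $\nu(\Gamma')=0$; and for each $B\in\PP$, integrating the identity $\tilde G-\tilde G_0=\xi'\cint G+\I(\Gamma')\cint\tilde G$ against $\I(B)$ and taking expectations gives $\mu(B)=\int_B\xi'\,d\nu+\mu(B\cap\Gamma')$, which is a Lebesgue decomposition of $\mu$ with respect to $\nu$. By the uniqueness assertion in the first proposition of this appendix, $\xi'$ is a version of $d(\P\otimes\tilde G)/d(\P\otimes G)$ and $\mu(\Gamma'\triangle\Gamma)=0$, as claimed.

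The step I expect to be the main obstacle is the uniqueness invoked in~(a): that a predictable non-decreasing \cadlag\ process is determined by the $\sigma$-finite measure it generates on $\PP$ (equivalently, uniqueness of the dual predictable projection in the $\sigma$-finite setting), which I would quote from the literature, e.g.\ \cite{JacodShiryaev02} or \cite{DellacherieMeyer82}. This is exactly where predictability matters: the pathwise Lebesgue decomposition of $d\tilde G(\omega)$ with respect to $dG(\omega)$ alone carries no information about the measurability of the density, whereas the decomposition of the Dol\'eans measures $\P\otimes\tilde G$ and $\P\otimes G$ does.
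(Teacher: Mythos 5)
Your proposal is correct and takes essentially the same route as the paper: both reduce part (a) to checking that $\xi\cint G+\I(\Gamma)\cint\tilde G$ and $\tilde G-\tilde G_0$ generate the same Dol\'eans measure on $\PP$ and then invoking the fact that a non-decreasing \cadlag\ predictable process is determined (up to indistinguishability) by that measure, and both handle part (b) by direct verification of the defining properties of the Lebesgue decomposition. The only difference is that you cite the determination fact from the literature, while the paper proves it inline via the identity $\E(M_\tau A_\tau)=\E(M_-\cint A_\tau)$ for the bounded martingale $M$ closed by $\I(B)$, $B\in\F_t$ (Lemma I.3.12 of Jacod--Shiryaev plus localization).
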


We denote any $\P\otimes G$-version of such a process $\xi$ by $d{\tilde
G}/d{G}$ or $d{\tilde G_t}/d{G_t}$, and call it a predictable Lebesgue
derivative of $\tilde G$ with respect to $G$. When it is necessary to
emphasize that the set $\Gamma$ is related to $\tilde G$ and $G$, we use the
notation $\Gamma_{\tilde G/ G}$.

\begin{proof}
Without loss of generality  assume $\tilde G_0=0$.

(a) Let $\xi=d(\P\otimes\tilde G)/d(\P\otimes G)$ and
$\Gamma$ be the corresponding set from the Lebesgue decomposition.
Define the process
\[
\tilde G' = \xi\cint G + \I(\Gamma)\cint \tilde G.
\]
We have to show that $\tilde G' = \tilde G$. Since $\tilde G'$ and $\tilde
G$ are \cadlag, it is enough to show that $\tilde G'_t = \tilde G_t$ a.s.\
for any $t\ge 0$, and this is equivalent to that
\begin{equation}
\E(\tilde G'_t \I(B)) = \E(\tilde G_t\I(B)) \quad\text{for any $B\in \F_t$}.
\label{A-pred-leb-1}
\end{equation}
Let $M$ be the bounded \cadlag\ martingale such that $M_u = \E(\I(B)\mid
\F_u)$. We have
\[
\E (\tilde G'_t \I(B)) = \E(\tilde G'_t M_t) = \E(M_- \cint \tilde G'_t),
\]
and, similarly,
\begin{equation}
\label{A-pred-leb-3}
\E (\tilde G_t \I(B))= \E(M_- \cint \tilde G_t),
\end{equation}
where we used the following fact: if $A_t$ is a non-decreasing \cadlag\
predictable process and $M_t$ is a bounded \cadlag\ martingale, then for any
stopping time $\tau$ we have $\E(M_\tau A_\tau) = \E(M_-\cint A_\tau)$. This
result is proved in \cite[Lemma~I.3.12]{JacodShiryaev02} in the case $\E
A_\infty < \infty$, from which our case follows by a localization procedure.

Finally, from the definition of $\tilde G'$ and the Lebesgue decomposition
of the measure $\P\otimes\tilde G$, it follows that the measures $\P\otimes
\tilde G$ and $\P\otimes \tilde G'$ coincide. Hence, for any non-negative
$\PP$-measurable function $f$ we have $\E (f\cint \tilde G_t) = \E
(f\cint\tilde G_t')$, which finishes the proof by
\eqref{A-pred-leb-1}--\eqref{A-pred-leb-3}.

(b) In view of the construction in (a), it only remains to show that if
$\xi,\Gamma$ satisfy \eqref{A-predictable-lebesgue}, then $\xi$ is the
Lebesgue derivative and $\Gamma$ is the corresponding predictable set. This
follows from straightforward verification of properties
\eqref{leb-decomp}--\eqref{leb-decomp-Gamma}.
\end{proof}

\phantomsection
\addcontentsline{toc}{chapter}{\refname}
\bibliographystyle{abbrvnat}
\bibliography{continuous-time-game}
\end{document}